\newtheorem{theorem}{Theorem}
\newtheorem{proposition}[theorem]{Proposition}
\newtheorem{remark}[theorem]{Remark}
\newenvironment{proof}[1][Proof]{\textbf{#1.} }{\ \rule{0.5em}{0.5em}}
\begin{document}

\title{On the Exponential Decay of the n-Point Correlation Functions and the
Analyticity of the Pressure. }
\author{Assane Lo}
\date{April 1st, 2007}
\maketitle

\begin{abstract}
The goal of this paper is to provide estimates leading to a direct proof of
the exponential decay of the n-point correlation functions for\ certain
unbounded models of Kac type. The methods are based on estimating higher
order derivatives of the solution of the Witten Laplacian equation on one
forms associated with the hamiltonian of the system. We also provide a
formula for the Taylor coefficients of the pressure that is suitable for a
direct proof the analyticity.
\end{abstract}

\section{Introduction}

In recent publications [66] we have given a generalization to the higher
dimensional case of the exponential decay of the two-point correlation
functions for models of Kac type. In this paper, we shall establish a weak
exponential decay of the n-point correlation functions, and provide an exact
formula suitable for a direct proof the analyticity of the pressure.

Let $\Lambda $ be a finite subset of $\mathbb{Z}^{d}$, and consider a
Hamiltonian $\Phi $ of the phase space $\mathbb{R}^{\Lambda }.$ We shall
focus on the case where $\Phi =\Phi _{\Lambda }$ is given by 
\begin{equation}
\Phi _{\Lambda }(x)=\frac{x^{2}}{2}+\Psi (x),
\end{equation}%
under suitable assumptions on $\Psi .$

Recall that if $\left\langle f\right\rangle $ denote the mean value of $f$
with respect to the\ Gibbs measure 
\begin{equation*}
e^{-\Phi (x)}dx,
\end{equation*}%
the covariance of two functions $g$ and $h$ is defined by 
\begin{equation}
\mathbf{cov}(g,h)=\left\langle (g-\left\langle g\right\rangle
)(h-\left\langle h\right\rangle )\right\rangle .
\end{equation}%
If one wants to have an expression of the covariance in the form 
\begin{equation}
\mathbf{cov}(g,h)=\left\langle \mathbf{\nabla }h\cdot \mathbf{w}%
\right\rangle _{L^{2}(\mathbb{R}^{n},\mathbb{R}^{n};e^{-\Phi }dx)},
\end{equation}%
for a suitable vector field $\mathbf{w}$ we get, after observing that $%
\mathbf{\nabla }h=\mathbf{\nabla }(h-\left\langle h\right\rangle ),$ and
integrating by parts,%
\begin{equation}
\mathbf{cov}(g,h)=\int (h-\left\langle h\right\rangle )(\mathbf{\nabla }\Phi
-\mathbf{\nabla })\cdot \mathbf{w}e^{-\Phi (x)}dx.
\end{equation}%
(Here we have assumed that $g$ and $h$ are functions of polynomial growth ).

This leads to the question of solving the equation%
\begin{equation}
g-\left\langle g\right\rangle =(\mathbf{\nabla }\Phi -\mathbf{\nabla })\cdot 
\mathbf{w.}
\end{equation}%
Now, trying to solve this above equation with $\mathbf{w}=\mathbf{\nabla }f,$
we obtain the equation%
\begin{equation}
\left. 
\begin{array}{c}
g-\left\langle g\right\rangle =\left( -\mathbf{\Delta }+\mathbf{\nabla }\Phi
\cdot \mathbf{\nabla }\right) f \\ 
\left\langle f\right\rangle =0.%
\end{array}%
\right\}
\end{equation}%
The existence and smoothness of the solution of this equation were
established in [8] (see also [66]) under certain assumptions on $\Phi .$ Now
taking gradient on both sides of $(6),$ we get%
\begin{equation}
\mathbf{\nabla }g=\left[ \left( -\mathbf{\Delta }+\mathbf{\nabla }\Phi \cdot 
\mathbf{\nabla }\right) \otimes Id\mathbf{+Hess}\Phi \right] \mathbf{\nabla }%
f.
\end{equation}%
We then obtain the emergence of two differential operators:%
\begin{equation}
A_{\Phi }^{(0)}:=-\mathbf{\Delta }+\mathbf{\nabla }\Phi \cdot \mathbf{\nabla 
}
\end{equation}%
and%
\begin{equation}
A_{\Phi }^{(1)}:=A_{\Phi }^{(0)}\otimes Id\mathbf{+Hess}\Phi .
\end{equation}%
Here the tensor notation means that $A_{\Phi }^{(0)}$ acts diagonally on the
vector field solution to produce a system of equations.

Thus 
\begin{equation}
\mathbf{cov}(g,h)=\int \left( A_{\Phi }^{(1)^{-1}}\mathbf{\nabla }g\cdot 
\mathbf{\nabla }h\right) e^{-\Phi (x)}dx.
\end{equation}%
The operators $A_{\Phi }^{(0)}$ and $A_{\Phi }^{(1)}$ are called the
Helffer-Sj\"{o}strand's operators. These are unbounded operators acting on
the weighted spaces 
\begin{equation*}
L^{2}(\mathbb{R}^{\Lambda },e^{-\Phi }dx)\;\text{and\ }L^{2}(\mathbb{R}%
^{\Lambda },\mathbb{R}^{\Lambda },e^{-\Phi }dx)
\end{equation*}%
respectively.

The formula (10) was introduced by Helffer and Sj\"{o}strand and in some
sense is a generalization of Brascamp-Lieb inequality as already pointed out
in [1].

The unitary transformation%
\begin{eqnarray*}
U_{\Phi } &:&L^{2}(\mathbb{R}^{\Lambda })\rightarrow L^{2}(\mathbb{R}%
^{\Lambda },e^{-\Phi }dx) \\
u &\longmapsto &e^{\frac{\Phi }{2}}u
\end{eqnarray*}%
will allow us to work with the unweighted spaces $L^{2}(\mathbb{R}^{\Lambda
})$ and $L^{2}(\mathbb{R}^{\Lambda },\mathbb{R}^{\Lambda })$ by converting
the operators $A_{\Phi }^{(0)}$ and $A_{\Phi }^{(1)}$ into equivalent
operators 
\begin{equation}
\mathbf{W}_{\Phi }^{\left( 0\right) }\mathbf{=-\Delta +}\frac{\left| \mathbf{%
\nabla }\Phi \right| ^{2}}{4}-\frac{\mathbf{\Delta }\Phi }{2}
\end{equation}%
and%
\begin{equation}
\mathbf{W}_{\Phi }^{\left( 1\right) }=\left( \mathbf{-\Delta +}\frac{\left| 
\mathbf{\nabla }\Phi \right| ^{2}}{4}-\frac{\mathbf{\Delta }\Phi }{2}\right)
\otimes \mathbf{I}+\mathbf{Hess}\Phi .
\end{equation}%
respectively.

The equivalence can be seen by observing that 
\begin{equation}
W_{\Phi }^{(.)}=e^{-\Phi /2}\circ A_{\Phi }^{(.)}\circ e^{\Phi /2}.
\end{equation}%
The operators $\mathbf{W}_{\Phi }^{\left( 0\right) }$ and $\mathbf{W}_{\Phi
}^{\left( 1\right) }$ are unbounded operators acting on 
\begin{equation*}
L^{2}(\mathbb{R}^{\Lambda })\text{ \ and \ }L^{2}(\mathbb{R}^{\Lambda },%
\mathbb{R}^{\Lambda })
\end{equation*}%
respectively. These are in fact, the euclidean versions of the Laplacians on
zero and one forms respectively, already introduced by E. Witten [18] in the
context Morse theory.

The equivalence between the operators $A_{\Phi }^{(.)}$ and Witten's
Laplacians was first observed by J. Sj\"{o}strand [13] in 1996.

\section{Higher Order Exponential Estimates}

We shall consider a Hamiltonian of the form 
\begin{equation*}
\Phi (x)=\Phi _{\Lambda }(x)=\frac{x^{2}}{2}+\Psi (x),\;\ \ \ \ \;x\in 
\mathbb{R}^{\Lambda }.
\end{equation*}%
where%
\begin{equation}
\left| \partial ^{\alpha }\mathbf{\nabla }\Psi \right| \leq C_{\alpha },\;\
\ \ \ \ \forall \alpha \in \mathbb{N}^{\left| \Lambda \right| }.
\end{equation}%
$g$ will denote a smooth function on $\mathbb{R}^{\Gamma }$ with lattice
support $S_{g}=\Gamma \left( \varsubsetneqq \Lambda \right) .$ We shall
identify $g$ with $\tilde{g}$ defined on $\mathbb{R}^{\Lambda }$ and shall
assume that 
\begin{equation}
\left| \partial ^{\alpha }\mathbf{\nabla }g\right| \leq C_{\alpha
}\;\;\;\;\;\;\;\;\forall \alpha \in \mathbb{N}^{\left| \Gamma \right| }.
\end{equation}%
As in [66] , we shall momentarily assume that $\Psi $ is compactly supported
in $\mathbb{R}^{\Lambda }$ and $g$ is compactly supported in $\mathbb{R}%
^{\Gamma }$ but these assumptions will be relaxed later on.

Let $M$ be the diagonal matrix%
\begin{equation*}
M=\left( \delta _{ij}\rho (i)\right) _{i,j\in \Lambda }
\end{equation*}%
where $\rho $ is a weight function on $\Lambda $ satisfying \ 
\begin{equation}
e^{-\lambda }\leq \frac{\rho \left( i\right) }{\rho (j)}\leq e^{\lambda }%
\text{, \ \ if }i\sim j\text{\ \ for some }\lambda >0.
\end{equation}%
Assume also that for every $M$ as above, there exists $\delta _{o}\in (0,1)$
such that%
\begin{equation}
\left\langle M^{-1}\mathbf{Hess}\Phi (x)Ma,a\right\rangle \geq \delta
_{o}a^{2},\text{ \ \ \ \ }\forall x\in \mathbb{R}^{\Lambda },\forall a\in 
\mathbb{R}^{\Lambda }.
\end{equation}

For instance, the $d-$dimensional nearest neighbor Kac model 
\begin{equation*}
\Phi _{\Lambda }(x)=\frac{x^{2}}{2}-2\sum_{i\sim j}\ln \cosh \left[ \sqrt{%
\frac{\nu }{2}}\left( x_{i}+x_{j}\right) \right] .
\end{equation*}%
satisfies this assumption for $\nu $ small enough. See [66] for details.

The following theorem has been proved in [66]:

\begin{theorem}[{A. Lo [66]}]
\textit{Let }$g$\textit{\ be\ a smooth function with compact support on }$%
\mathbb{R}^{\Gamma }$ \textit{satisfying}%
\begin{equation}
\left| \partial ^{\alpha }\mathbf{\nabla }g\right| \leq C_{\alpha
}\;\;\;\;\;\;\;\forall \alpha \in \mathbb{N}^{\left| \Gamma \right| }
\end{equation}%
and $\Phi $\textit{\ is as above. If }$f$ is \textit{the unique C}$^{\infty
}-$\textit{solution of the equation }%
\begin{equation*}
\left\{ 
\begin{tabular}{l}
$-\mathbf{\Delta }f+\mathbf{\nabla }\Phi \cdot \mathbf{\nabla }%
f=g-\left\langle g\right\rangle $ \\ 
$\left\langle f\right\rangle _{L^{2}(\mu )}=0,$%
\end{tabular}%
\right.
\end{equation*}%
\textit{then}%
\begin{equation*}
\sum_{i\in \Lambda }f_{x_{i}}^{2}(x)e^{2\kappa d(i,S_{g})}\leq C\;\;\ \
\forall x\in \mathbb{R}^{\Lambda }.
\end{equation*}%
$\kappa $ and $C$ \textit{are positive constants. }$C$\textit{\ could
possibly depend on the size of the support of }$g$\textit{\ but does not
depend on }$\Lambda $\textit{\ and }$f.$
\end{theorem}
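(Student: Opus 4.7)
The plan is to derive an elliptic equation for $\mathbf{\nabla} f$ by differentiating the PDE it solves, extract a weighted $L^{2}$ energy estimate using the coercivity hypothesis (17), and then upgrade that $L^{2}$ bound to the stated pointwise statement.

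Applying $\mathbf{\nabla}$ to the equation $A_{\Phi}^{(0)} f = g-\langle g\rangle$ yields, as in (7),
\begin{equation*}
A_{\Phi}^{(1)}\, \mathbf{\nabla} f \;=\; \mathbf{\nabla} g.
\end{equation*}
Fix $\kappa>0$ small enough that the weight $\rho(i):=e^{\kappa d(i,S_{g})}$ satisfies (16) with $\lambda=\kappa$, and set $M=(\delta_{ij}\rho(i))_{i,j\in\Lambda}$. Since $M$ is independent of $x$, the scalar operator $A_{\Phi}^{(0)}$ commutes with componentwise multiplication by $M$, so the vector field $w:=M\mathbf{\nabla} f$ satisfies
\begin{equation*}
\bigl[A_{\Phi}^{(0)}\otimes Id + M\,\mathbf{Hess}\Phi\, M^{-1}\bigr]\,w \;=\; M\mathbf{\nabla} g.
\end{equation*}

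Next I pair this equation with $w$ in $L^{2}(\mathbb{R}^{\Lambda},\mathbb{R}^{\Lambda};e^{-\Phi}dx)$. Integration by parts turns the first contribution into $\sum_i \int |\mathbf{\nabla} w_i|^{2} e^{-\Phi}dx \geq 0$. For the Hessian piece, the symmetry of $\mathbf{Hess}\Phi$ gives the pointwise identity $\langle M\,\mathbf{Hess}\Phi\, M^{-1} w,w\rangle = \langle M^{-1}\mathbf{Hess}\Phi\, M w,w\rangle$, and hypothesis (17) applied with $a=w(x)$ bounds this term below by $\delta_{o}|w(x)|^{2}$. Because $\mathbf{\nabla} g$ vanishes outside $\Gamma=S_{g}$, where $\rho\equiv 1$, the source $M\mathbf{\nabla} g$ coincides with $\mathbf{\nabla} g$ and has $L^{2}(e^{-\Phi})$ norm bounded by a constant depending only on $g$. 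Cauchy--Schwarz then delivers the weighted energy estimate
\begin{equation*}
\int \sum_{i\in\Lambda}\rho(i)^{2}\, f_{x_{i}}^{2}\, e^{-\Phi}dx \;\leq\; C.
\end{equation*}

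The principal obstacle is to promote this integrated bound to the pointwise inequality claimed in the theorem, \emph{with a constant independent of $|\Lambda|$}. The strategy is to differentiate the first-order system for $w$ arbitrarily many times; the uniform bounds (14) on derivatives of $\mathbf{\nabla}\Psi$ and (15) on those of $\mathbf{\nabla} g$ keep all commutator terms under control, so the same coercivity argument applied to $\partial^{\alpha} w$ produces the higher-order bounds $\int\sum_i\rho(i)^{2}|\partial^{\alpha}\partial_i f|^{2}e^{-\Phi}dx\leq C_{\alpha}$. A blunt Sobolev embedding on $\mathbb{R}^{\Lambda}$ would inject a constant growing with $|\Lambda|$, which would wreck the conclusion; I would therefore pass from the weighted $H^{s}$ bounds to the pointwise bound through a localized elliptic regularity estimate on a ball of fixed radius around each $x$, exploiting that once $x$ is fixed the Gaussian factor in $e^{-\Phi}$ is bounded below on such a ball by a constant that does not depend on $|\Lambda|$. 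The compactness hypotheses on $\Psi$ and $g$ are finally removed by an approximation argument, justified by the $\Lambda$-independence of every constant in the chain above.
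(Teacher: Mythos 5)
Your derivation of the conjugated system for $w=M\mathbf{\nabla }f$ and the weighted $L^{2}(e^{-\Phi }dx)$ energy estimate are correct as far as they go, but the step you yourself flag as ``the principal obstacle'' is where the argument actually fails, and the repair you sketch does not work. The bound $\int \sum_{i}\rho (i)^{2}f_{x_{i}}^{2}e^{-\Phi }dx\leq C$ (even together with all of its higher-order analogues) cannot be promoted to $\sup_{x}\sum_{i}\rho (i)^{2}f_{x_{i}}^{2}(x)\leq C$: the measure $e^{-\Phi }dx$ is essentially Gaussian, so a unit ball around a point $x$ far from the origin carries mass of order $e^{-|x|^{2}/2}$, and undoing the weight to apply a local elliptic estimate costs a factor $e^{+\sup_{B(x,1)}\Phi }$ that is unbounded in $x$. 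Your claim that the Gaussian factor ``is bounded below on such a ball by a constant that does not depend on $|\Lambda |$'' is beside the point --- it is not bounded below uniformly in $x$, which is what the theorem requires. In addition, a Sobolev embedding or interior elliptic estimate on a ball in $\mathbb{R}^{\Lambda }$ still needs on the order of $|\Lambda |/2$ derivatives and carries dimension-dependent constants, so localization does not remove the $|\Lambda |$-dependence either.

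The proof in [66] (whose mechanism is displayed in this paper's proof of Proposition 2, of which Theorem 1 is the case $k=1$) avoids the integrated estimate altogether and works pointwise via a maximum principle. Since $\mathbf{\nabla }f\rightarrow 0$ at infinity, the function $x\longmapsto \sum_{i}\rho (i)^{2}f_{x_{i}}^{2}(x)$ attains a global maximum at some $x_{o}$. Testing the system $A_{\Phi }^{(1)}\mathbf{\nabla }f=\mathbf{\nabla }g$ against the fixed vector $t=\left( \rho (i)^{2}f_{x_{i}}(x_{o})\right) _{i}$, one observes that $\left( \mathbf{\nabla }\Phi \cdot \mathbf{\nabla }-\mathbf{\Delta }\right) \langle \mathbf{\nabla }f,t\rangle \geq 0$ at the maximizing point $x_{o}$; hypothesis (17) then bounds the Hessian term below by $\delta _{o}\sum_{i}\rho (i)^{2}f_{x_{i}}^{2}(x_{o})$, and Cauchy--Schwarz against $\mathbf{\nabla }g$ (supported in $S_{g}$, where $\rho \equiv 1$) yields $\bigl( \sum_{i}\rho (i)^{2}f_{x_{i}}^{2}(x_{o})\bigr) ^{1/2}\leq \delta _{o}^{-1}\bigl( \sum_{i}g_{x_{i}}^{2}\bigr) ^{1/2}\leq C$, which is the desired bound at the maximum and hence everywhere, with constants visibly independent of $\Lambda $ and $f$. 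If you want to keep an energy-method flavor you would have to work in $L^{\infty }$ from the start; the $L^{2}(e^{-\Phi })$ route does not reach the stated conclusion.
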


We now propose to generalize this theorem to higher order derivatives.

\begin{proposition}
\textit{If in addition to the assumptions of theorem 1, }$\Phi $\textit{\
satisfies the following growth condition: for }$\kappa >0$ \textit{as above,}%
\begin{equation}
\sum_{j,i_{1},...,i_{k}\in \Lambda }\Phi
_{x_{j}x_{i_{1}}...x_{i_{k}}}^{2}(x)e^{2\kappa d(\left\{
i_{1},...,i_{k}\right\} ,S_{g})}\leq C_{k}\;\;\;\;\;\;\;\forall x\in \mathbb{%
R}^{\Lambda },\text{\textit{\ for }}k\geq 2
\end{equation}%
\textit{for some }$C_{k}>0$\textit{\ not dependent on }$\Lambda $ and $f$, 
\textit{then for any }$k\geq 1,$ $f$\textit{\ satisfies }%
\begin{equation}
\sum_{i_{1},...,i_{k}\in \Lambda }f_{x_{i_{1}}...x_{i_{k}}}^{2}(x)e^{2\kappa
d(\left\{ i_{1},...,i_{k}\right\} ,S_{g})}\leq C_{k,g}\;\;\;\;\;\;\;\forall
x\in \mathbb{R}^{\Lambda }
\end{equation}%
\textit{where }$C_{k,g}>0$\textit{\ is a constant that depends on the size
of the support of }$g$ \textit{but not on }$\Lambda $ and $f.$
\end{proposition}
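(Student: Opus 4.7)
The plan is to prove the estimate by induction on $k$, with the base case $k=1$ being Theorem 1. For the inductive step, assume the bound holds for all orders $1,\ldots,k-1$ and fix an ordered tuple $\beta=(i_1,\ldots,i_k)\in\Lambda^k$. Applying $\partial^\beta := \partial_{x_{i_1}}\cdots\partial_{x_{i_k}}$ to $A_\Phi^{(0)} f = g - \langle g\rangle$ and expanding $\partial^\beta(\nabla\Phi\cdot\nabla f)$ by Leibniz, one obtains
\[
A_\Phi^{(0)}(\partial^\beta f) + \sum_{l=1}^k \sum_{j\in\Lambda}\Phi_{x_{i_l}x_j}\,\partial_{x_j}\partial^{\beta\setminus\{i_l\}}f = \partial^\beta g - R_\beta,
\]
where $R_\beta$ collects the Leibniz terms of the form $\Phi_{x_j x_{i_{l_1}}\cdots x_{i_{l_r}}}\partial_{x_j}\partial^{\beta\setminus\{i_{l_1},\ldots,i_{l_r}\}}f$ with $r\geq 2$, and hence involves only derivatives of $f$ of order at most $k-1$.

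Set $\rho_\beta := e^{\kappa d(\{i_1,\ldots,i_k\},S_g)}$ and $F_k(x) := \sum_{\beta\in\Lambda^k}\rho_\beta^2(\partial^\beta f)^2(x)$. Multiplying the Bochner identity $u A_\Phi^{(0)} u = \tfrac12 A_\Phi^{(0)} u^2 + |\nabla u|^2$ applied to $u=\partial^\beta f$ by $\rho_\beta^2$ and summing over $\beta$ produces
\[
\tfrac12 A_\Phi^{(0)}F_k + \sum_\beta\rho_\beta^2|\nabla\partial^\beta f|^2 + H_k = \sum_\beta\rho_\beta^2(\partial^\beta f)(\partial^\beta g - R_\beta),
\]
where $H_k$ is the Hessian coupling term coming from the $l$-sum. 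The central step is the coercivity $H_k\geq k\delta_0 F_k$: for each $l\in\{1,\ldots,k\}$ and each fixed assignment of $(i_s)_{s\neq l}$, the lattice weight $i\mapsto\mu(i):=\rho_\beta|_{i_l=i}$ is Lipschitz with constant $\kappa$ by the triangle inequality for $d$, so hypothesis (17) applies with $M=\mathrm{diag}(\mu)$; using the symmetry $\Phi_{x_p x_q}=\Phi_{x_q x_p}$ to recast the inner $(i_l,j)$-sum as $\langle M^{-1}\mathbf{Hess}\,\Phi\,M(a\mu),a\mu\rangle$, where $a$ denotes the corresponding $k$-th derivative of $f$ with the $l$-th slot running over $\Lambda$, the assumption delivers the lower bound $\delta_0\sum_i a(i)^2\mu(i)^2$, and summing over $l$ and the remaining indices yields $k\delta_0 F_k$.

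For the remainder, the elementary bound $\rho_\beta\leq\rho_{\beta'_1}\rho_{\beta_2}$ (with $\beta'_1:=\{i_s:s\in S\}\subseteq\beta$ and $\beta_2:=\{j\}\cup\{i_s:s\notin S\}$, a consequence of $\rho_\beta\leq\rho_{\beta'_1}$ and $\rho_{\beta_2}\geq 1$) combined with a weighted Cauchy-Schwarz lets me split each Leibniz term of $R_\beta$ into a $\Phi$-factor controlled by (19) and an $f$-factor controlled by the inductive hypothesis, giving
\[
\Big|\sum_\beta \rho_\beta^2(\partial^\beta f)R_\beta\Big|\leq \eta F_k + C_k\sum_{r=2}^k\Big(\sum_{\beta'_1\in\Lambda^r,\,j\in\Lambda}\rho_{\beta'_1}^2\Phi_{x_j x_{\beta'_1}}^2\Big)F_{k-r+1}(x)\leq \eta F_k + \widetilde C_{k,g}
\]
for any $\eta>0$. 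The source contribution $\sum_\beta\rho_\beta^2(\partial^\beta f)(\partial^\beta g)$ is immediate since $\partial^\beta g$ vanishes unless all $i_1,\ldots,i_k\in S_g$, on which $\rho_\beta=1$. Choosing $\eta$ small enough to absorb the first term into the coercivity yields a differential inequality $\tfrac12 A_\Phi^{(0)}F_k + c_k F_k\leq C_{k,g}$ with $c_k>0$; since $F_k$ attains its supremum at some $x_0\in\mathbb{R}^\Lambda$ under the temporary compact-support hypotheses on $\Psi$ and $g$, the maximum principle ($-\Delta F_k(x_0)\geq 0$ and $\nabla F_k(x_0)=0$) gives $A_\Phi^{(0)}F_k(x_0)\geq 0$, hence the uniform bound $F_k\leq C_{k,g}/c_k$. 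The main obstacle is the coercivity step: the symmetrisation of the inner $(i_l,j)$-sum must be arranged so as to match exactly the $M^{-1}\mathbf{Hess}\,\Phi\,M$ form of (17), and the admissibility (16) of $\mu$ must be verified uniformly in the remaining $i_s$; once this is in place, the bookkeeping required to pair each piece of $R_\beta$ with (19) and with the inductive estimate is tedious but mechanical.
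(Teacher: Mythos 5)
Your proof is correct and follows essentially the same route as the paper: induction on $k$, differentiating the equation $k$ times, coercivity of the Hessian coupling via hypothesis (17) applied slot by slot with the weight $\rho_\beta$, control of the lower-order Leibniz remainder by (19) plus the inductive hypothesis and Cauchy--Schwarz, and a maximum principle at the point where the weighted sum of squared $k$-th derivatives attains its supremum. The only difference is organizational --- you maximize the full sum $F_k$ and use the Bochner identity, whereas the paper fixes $i_2,\dots,i_k$, maximizes only the $i_1$-sum with tensor test functions $t_1,\dots,t_k$, and then sums over the remaining indices; your variant is in fact cleaner, since it avoids the paper's awkward dependence of the maximizing point $x_o$ on $i_2,\dots,i_k$.
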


\textbf{Proof.}

The case $k=1$ being theorem 1, we assume for induction that the result is
true when $k$ is replaced by $\hat{k}<k$ with $\hat{k}\geq 2.$

For $k\geq 2($see [8] for details), we have%
\begin{eqnarray*}
\left\langle \mathbf{\nabla }^{k}g,t_{1}\otimes ...\otimes
t_{k}\right\rangle &=&\left( \mathbf{\nabla }\Phi \cdot \mathbf{\nabla }-%
\mathbf{\Delta }\right) \left\langle \mathbf{\nabla }^{k}f,t_{1}\otimes
...\otimes t_{k}\right\rangle \\
&&+\sum_{j=1}^{k}\left\langle \mathbf{\nabla }^{k}f,t_{1}\otimes ...\otimes 
\mathbf{Hess}\Phi t_{j}\otimes ...\otimes t_{k}\right\rangle \\
&&+\sum\limits_{\substack{ A\cup B=\{1,...,k\},A\cap B=\emptyset  \\ \#B\leq
k-2}}\left\langle t_{A}(\partial _{x})\mathbf{\nabla }\Phi ,t_{B}(\partial
_{x})\mathbf{\nabla }f\right\rangle .
\end{eqnarray*}%
In the right hand side of this last above equality, we have used the notation%
\begin{equation*}
t_{J}(\partial _{x})u:=\left\langle \mathbf{\nabla }^{\#J}u,t_{1}\otimes
...\otimes t_{\#J}\right\rangle .
\end{equation*}%
Now fix $i_{2},...,i_{k}\in \Lambda .$ Because $\mathbf{\nabla }%
^{k}f(x)\rightarrow 0$ as $\left| x\right| \rightarrow \infty $ (see [66]),
we consider $x_{o}\in \mathbb{R}^{\Lambda }$ that maximizes%
\begin{equation*}
x\longmapsto \sum_{i_{1}}f_{x_{i_{1}}...x_{i_{k}}}^{2}\rho
^{2}(i_{1},...,i_{k})
\end{equation*}%
where%
\begin{equation*}
\rho (i_{1},...,i_{k})=e^{\kappa d(\left\{ i_{1},...,i_{k}\right\} ,S_{g})}.
\end{equation*}%
Observe here that $x_{o}$ could possibly depend on $i_{2},...,i_{k}\in
\Lambda .$

Choose 
\begin{equation*}
t_{1}=\left( \rho (i_{1},...,i_{k})f_{x_{i_{1}}...x_{i_{k}}}(x_{o})\right)
_{i_{1}\in \Lambda }
\end{equation*}%
and 
\begin{equation*}
t_{j}=e_{i_{j}}\;\;\;\text{ if \ \ \ }j=2,...,k
\end{equation*}%
Let $M_{1}$ be the diagonal matrix%
\begin{equation*}
M_{1}=\left( \delta _{si_{1}}\rho (i_{1},...,i_{k})\right) _{si_{1}}
\end{equation*}%
and\ 
\begin{equation}
M_{j}=\mathbf{I}\text{ \ if }j\neq 1
\end{equation}%
in particular, we have%
\begin{eqnarray*}
&&\left\langle \mathbf{\nabla }^{k}g,M_{1}t_{1}\otimes ...\otimes
M_{k}t_{k}\right\rangle \\
&=&\left( \mathbf{\nabla }\Phi \cdot \mathbf{\nabla }-\mathbf{\Delta }%
\right) \left\langle \mathbf{\nabla }^{k}f,M_{1}t_{1}\otimes ...\otimes
M_{k}t_{k}\right\rangle \\
&&+\sum_{j=1}^{k}\left\langle \mathbf{\nabla }^{k}f,M_{1}t_{1}\otimes
...\otimes \mathbf{Hess}\Phi M_{j}t_{j}\otimes ...\otimes
M_{k}t_{k}\right\rangle \\
&&+\sum\limits_{\substack{ A\cup B=\{1,...,k\},A\cap B=\emptyset  \\ \#B\leq
k-2}}\left\langle t_{MA}(\partial _{x})\mathbf{\nabla }\Phi ,t_{MB}(\partial
_{x})\mathbf{\nabla }f\right\rangle
\end{eqnarray*}%
\begin{equation*}
t_{MA}(\partial _{x})u:=\left\langle \mathbf{\nabla }^{%
\#A}f,M_{1}t_{i_{j_{1}}}\otimes ...\otimes
M_{\#A}t_{i_{j_{\#A}}}\right\rangle ,\;\;\;\;\;j_{i}\in A.
\end{equation*}%
As in [66], the function 
\begin{equation*}
x\longmapsto \left\langle \mathbf{\nabla }^{k}f(x),M_{1}t_{1}\otimes
...\otimes M_{k}t_{k}\right\rangle
\end{equation*}%
achieves its maximum at $x_{o}.$ Using the notation $\Phi _{x_{i_{A}}}=\Phi
_{x_{i_{\ell _{1}}}...x_{i_{\ell _{r}}}}$ if $A=\left\{ \ell _{1},...\ell
_{r}\right\} \subset \left\{ 1,...k\right\} ,$ we therefore have%
\begin{eqnarray*}
&&\sum_{i_{1}\in \Lambda }g_{x_{i_{1}}...x_{i_{k}}}(x_{o})\rho
(i_{1},...,i_{k})^{2}f_{x_{i_{1}}...x_{i_{k}}}(x_{o}) \\
&\geq &\sum_{s\in \Lambda }\sum_{i_{1}\in \Lambda
}f_{x_{i_{1}}...x_{i_{k}}}(x_{o})f_{x_{s}x_{i_{2}}...x_{i_{k}}}(x_{o})\rho
(i_{1},...,i_{k})^{2}\Phi _{x_{s}x_{i_{1}}}(x_{o}) \\
&&+\sum_{j=2}^{k}\sum_{i_{1}\in \Lambda }\sum_{s\in \Lambda }f_{x_{i_{1}}...%
\underbrace{x_{s}}_{jth}...x_{i_{k}}}(x_{o})f_{x_{i_{1}}...x_{i_{k}}}(x_{o})%
\rho (i_{1},...,i_{k})^{2}\Phi _{x_{s}x_{i_{j}}}(x_{o}) \\
&&+\sum\limits_{\substack{ A\cup B=\{1,...,k\},A\cap B=\emptyset  \\ \#B\leq
k-2  \\ 1\in A}}\left\langle \sum_{i_{1}\in \Lambda }\mathbf{\nabla }\Phi
_{x_{i_{A}}}f_{x_{i_{1}}...x_{i_{k}}}(x_{o})\rho (i_{1},...,i_{k})^{2},%
\mathbf{\nabla }f_{x_{i_{B}}}(x_{o})\right\rangle \\
&&+\sum\limits_{\substack{ A\cup B=\{1,...,k\},A\cap B=\emptyset  \\ \#B\leq
k-2  \\ 1\in B}}\left\langle \mathbf{\nabla }\Phi
_{x_{i_{A}}},\sum_{i_{1}\in \Lambda }\mathbf{\nabla }%
f_{x_{i_{B}}}(x_{o})f_{x_{i_{1}}...x_{i_{k}}}(x_{o})\rho
(i_{1},...,i_{k})^{2}\right\rangle .
\end{eqnarray*}%
Equivalently 
\begin{eqnarray*}
&&\sum_{i_{1}\in \Lambda }g_{x_{i_{1}}...x_{i_{k}}}(x_{o})\rho
(i_{1},...,i_{k})^{2}f_{x_{i_{1}}...x_{i_{k}}}(x_{o}) \\
&\geq &\sum_{s\in \Lambda }\sum_{i_{1}\in \Lambda
}f_{x_{i_{1}}...x_{i_{k}}}(x_{o})f_{x_{s}...x_{i_{k}}}(x_{o})\rho
(i_{1},...,i_{k})^{2}\Phi _{x_{s}x_{i_{1}}}(x_{o}) \\
&&+\sum_{j=2}^{k}\sum_{i_{1}\in \Lambda }\sum_{s\in \Lambda }f_{x_{i_{1}}...%
\underbrace{x_{s}}_{jth}...x_{i_{k}}}(x_{o})f_{x_{i_{1}}...x_{i_{k}}}(x_{o})%
\rho (i_{1},...,i_{k})^{2}\Phi _{x_{s}x_{i_{j}}}(x_{o}) \\
&&+\sum\limits_{\substack{ A\cup B=\{1,...,k\},A\cap B=\emptyset  \\ \#B\leq
k-2  \\ 1\in A}}\sum_{s\in \Lambda }\sum_{i_{1}\in \Lambda }\Phi
_{x_{i_{A}}x_{s}}f_{x_{i_{1}}...x_{i_{k}}}(x_{o})\rho
(i_{1},...,i_{k})^{2}f_{x_{i_{B}}x_{s}}(x_{o}) \\
&&+\sum\limits_{\substack{ A\cup B=\{1,...,k\},A\cap B=\emptyset  \\ \#B\leq
k-2  \\ 1\in B}}\sum_{i_{1}\in \Lambda }\sum_{s\in \Lambda }\Phi
_{x_{i_{A}}x_{s}}f_{x_{i_{B}}x_{s}}(x_{o})f_{x_{i_{1}}...x_{i_{k}}}(x_{o})%
\rho (i_{1},...,i_{k})^{2}.
\end{eqnarray*}%
Now taking summation over $i_{2},...,i_{k,}$ we get%
\begin{eqnarray*}
&&\sum_{i_{2},...,i_{k}\in \Lambda }\sum_{i_{1}\in \Lambda
}g_{x_{i_{1}}...x_{i_{k}}}(x_{o})\rho
(i_{1},...,i_{k})^{2}f_{x_{i_{1}}...x_{i_{k}}}(x_{o}) \\
&\geq &\sum_{i_{2},...,i_{k}\in \Lambda }\sum_{s\in \Lambda }\sum_{i_{1}\in
\Lambda }f_{x_{i_{1}}...x_{i_{k}}}(x_{o})f_{x_{s}...x_{i_{k}}}(x_{o})\rho
(i_{1},...,i_{k})^{2}\Phi _{x_{s}x_{i_{1}}}(x_{o}) \\
&&+\sum_{i_{2},...,i_{k}\in \Lambda }\sum_{j=2}^{k}\sum_{i_{1}\in \Lambda
}\sum_{s\in \Lambda }f_{x_{i_{1}}...\underbrace{x_{s}}%
_{jth}...x_{i_{k}}}(x_{o})f_{x_{i_{1}}...x_{i_{k}}}(x_{o})\rho
(i_{1},...,i_{k})^{2}\Phi _{x_{s}x_{i_{j}}}(x_{o}) \\
&&+\sum_{i_{2},...,i_{k}\in \Lambda }\sum\limits_{\substack{ A\cup
B=\{1,...,k\},A\cap B=\emptyset  \\ \#B\leq k-2  \\ 1\in A}}\sum_{s\in
\Lambda }\sum_{i_{1}\in \Lambda }\Phi
_{x_{i_{A}}x_{s}}f_{x_{i_{1}}...x_{i_{k}}}(x_{o})\rho
(i_{1},...,i_{k})^{2}f_{x_{i_{B}}x_{s}}(x_{o}) \\
&&+\sum_{i_{2},...,i_{k}\in \Lambda }\sum\limits_{\substack{ A\cup
B=\{1,...,k\},A\cap B=\emptyset  \\ \#B\leq k-2  \\ 1\in B}}\sum_{i_{1}\in
\Lambda }\sum_{s\in \Lambda }\Phi
_{x_{i_{A}}x_{s}}f_{x_{i_{B}}x_{s}}(x_{o})f_{x_{i_{1}}...x_{i_{k}}}(x_{o})%
\rho (i_{1},...,i_{k})^{2}.
\end{eqnarray*}%
Next, we propose to estimate each term of the right hand side of this above
inequality. 
\begin{eqnarray*}
&&\sum_{i_{2},...,i_{k}\in \Lambda }\sum_{s\in \Lambda }\sum_{i_{1}\in
\Lambda }f_{x_{i_{1}}...x_{i_{k}}}(x_{o})f_{x_{s}...x_{i_{k}}}(x_{o})\rho
(i_{1},...,i_{k})^{2}\Phi _{x_{s}x_{i_{1}}}(x_{o}) \\
&=&\sum_{i_{2},...,i_{k}\in \Lambda }\left\langle \mathbf{\nabla }%
f_{x_{i_{2}}...x_{i_{k}}}(x_{o}),\mathbf{Hess}\Phi M_{1}t_{1}\right\rangle \\
&=&\sum_{i_{2},...,i_{k}\in \Lambda }\left\langle M_{1}\mathbf{\nabla }%
f_{x_{i_{2}}...x_{i_{k}}}(x_{o}),M_{1}^{-1}\mathbf{Hess}\Phi
M_{1}t_{1}\right\rangle \\
&=&\sum_{i_{2},...,i_{k}\in \Lambda }\left\langle t_{1},M_{1}^{-1}\mathbf{%
Hess}\Phi M_{1}t_{1}\right\rangle \\
&\geq &\delta _{o}\sum_{i_{2},...,i_{k}\in \Lambda }\left\| t_{1}\right\|
^{2} \\
&=&\delta _{o}\sum_{i_{1},...i_{k}\in \Lambda
}f_{x_{i_{1}}...x_{i_{k}}}^{2}(x_{o})\rho (i_{1},...,i_{k})^{2}.
\end{eqnarray*}%
Similarly, it is easy to see that 
\begin{eqnarray*}
&&\sum_{i_{2},...,i_{k}\in \Lambda }\sum_{j=2}^{k}\sum_{i_{1}\in \Lambda
}\sum_{s\in \Lambda }f_{x_{i_{1}}...\underbrace{x_{s}}%
_{jth}...x_{i_{k}}}(x_{o})f_{x_{i_{1}}...x_{i_{k}}}(x_{o})\rho
(i_{1},...,i_{k})^{2}\Phi _{x_{s}x_{i_{j}}}(x_{o}) \\
&\geq &(k-1)\delta _{0}\sum_{i_{1},...i_{k}\in \Lambda
}f_{x_{i_{1}}...x_{i_{k}}}^{2}(x_{o})\rho (i_{1},...,i_{k})^{2}
\end{eqnarray*}%
To estimate the last two terms, we have%
\begin{eqnarray*}
&&\sum_{i_{2},...,i_{k}\in \Lambda }\sum\limits_{\substack{ A\cup
B=\{1,...,k\},A\cap B=\emptyset  \\ \#B\leq k-2  \\ 1\in A}}\sum_{i_{1}\in
\Lambda }\sum_{s\in \Lambda }\left| \Phi
_{x_{i_{A}}x_{s}}f_{x_{i_{1}}...x_{i_{k}}}(x_{o})\rho
(i_{1},...,i_{k})^{2}f_{x_{i_{B}}x_{s}}(x_{o})\right| \\
&\leq &\left[ \sum_{i_{1},...,i_{k}\in \Lambda
}f_{x_{i_{1}}...x_{i_{k}}}^{2}(x_{o})\rho (i_{1},...,i_{k})^{2}\right]
^{1\backslash 2}\times \\
&&\left[ \sum_{i_{1},...,i_{k}\in \Lambda }\left( \sum\limits_{\substack{ %
A\cup B=\{1,...,k\},A\cap B=\emptyset  \\ \#B\leq k-2  \\ 1\in A}}\sum_{s\in
\Lambda }\left| \Phi _{x_{i_{A}}x_{s}}\rho
(i_{1},...,i_{k})f_{x_{i_{B}}x_{s}}(x_{o})\right| \right) ^{2}\right]
^{1\backslash 2}.
\end{eqnarray*}%
To estimate the second factor of the right hand side of this last above
inequality, we have%
\begin{eqnarray*}
&&\sum_{i_{1},...,i_{k}\in \Lambda }\left( \sum\limits_{\substack{ A\cup
B=\{1,...,k\},A\cap B=\emptyset  \\ \#B\leq k-2  \\ 1\in A}}\sum_{s\in
\Lambda }\left| \Phi _{x_{i_{A}}x_{s}}(x_{o})\rho
(i_{1},...,i_{k})f_{x_{i_{B}}x_{s}}(x_{o})\right| \right) ^{2} \\
&\leq &C_{k}\sum_{i_{1},...,i_{k}\in \Lambda }\sum\limits_{\substack{ A\cup
B=\{1,...,k\},A\cap B=\emptyset  \\ \#B\leq k-2  \\ 1\in A}}\left(
\sum_{s\in \Lambda }\Phi _{x_{i_{A}}x_{s}}(x_{o})\rho
(i_{1},...,i_{k})f_{x_{i_{B}}x_{s}}(x_{o})\right) ^{2} \\
&\leq &C_{k}\sum_{i_{1},...,i_{k}\in \Lambda }\sum\limits_{\substack{ A\cup
B=\{1,...,k\},A\cap B=\emptyset  \\ \#B\leq k-2  \\ 1\in A}}\left(
\sum_{s\in \Lambda }\Phi _{x_{i_{A}}x_{s}}^{2}(x_{o})\rho
^{2}(i_{1},...,i_{k})\right) \times \\
&&\left( \sum_{s\in \Lambda }\rho
^{2}(i_{1},...,i_{k})f_{x_{i_{B}}x_{s}}^{2}(x_{o})\right) \\
&\leq &C_{k}\sum_{i_{1},...,i_{k}\in \Lambda }\sum\limits_{\substack{ A\cup
B=\{1,...,k\},A\cap B=\emptyset  \\ \#B\leq k-2  \\ 1\in A}}\left(
\sum_{s\in \Lambda }\Phi _{x_{i_{A}}x_{s}}^{2}(x_{o})e^{2\kappa d\left(
\left\{ i_{j}:j\in A\right\} ,S_{g}\right) }\right) \times \\
&&\left( \sum_{s\in \Lambda }e^{2\kappa d\left( \left\{ i_{j}:j\in B\right\}
\cup \left\{ s\right\} ,S_{g}\right) }f_{x_{i_{B}}x_{s}}^{2}(x_{o})\right) \\
&\leq &C_{k}.
\end{eqnarray*}%
This last inequality \ above follows from the induction assumption and that
of $\Phi .$

Thus,%
\begin{eqnarray*}
&&\sum_{i_{2},...,i_{k}\in \Lambda }\sum\limits_{\substack{ A\cup
B=\{1,...,k\},A\cap B=\emptyset  \\ \#B\leq k-2  \\ 1\in A}}\sum_{i_{1}\in
\Lambda }\sum_{s\in \Lambda }\Phi
_{x_{i_{A}}x_{s}}f_{x_{i_{1}}...x_{i_{k}}}(x_{o})\rho
(i_{1},...,i_{k})^{2}f_{x_{i_{B}}x_{s}}(x_{o}) \\
&\geq &-C_{k}\left[ \sum_{i_{1},...,i_{k}\in \Lambda
}f_{x_{i_{1}}...x_{i_{k}}}^{2}(x_{o})\rho (i_{1},...,i_{k})^{2}\right]
^{1\backslash 2}.
\end{eqnarray*}%
Similarly, we have 
\begin{eqnarray*}
&&\sum_{i_{2},...,i_{k}\in \Lambda }\sum\limits_{\substack{ A\cup
B=\{1,...,k\},A\cap B=\emptyset  \\ \#B\leq k-2  \\ 1\in B}}\sum_{i_{1}\in
\Lambda }\sum_{s\in \Lambda }\Phi
_{x_{i_{A}}x_{s}}f_{x_{i_{B}}x_{s}}(x_{o})f_{x_{i_{1}}...x_{i_{k}}}(x_{o})%
\rho (i_{1},...,i_{k})^{2} \\
&\geq &-C_{k}\left[ \sum_{i_{1},...,i_{k}\in \Lambda
}f_{x_{i_{1}}...x_{i_{k}}}^{2}(x_{o})\rho (i_{1},...,i_{k})^{2}\right]
^{1\backslash 2}.
\end{eqnarray*}%
We then finally get 
\begin{eqnarray*}
&&\sum_{i_{1},...,i_{k}\in \Lambda }g_{x_{i_{1}}...x_{i_{k}}}(x_{o})\rho
(i_{1},...,i_{k})^{2}f_{x_{i_{1}}...x_{i_{k}}}(x_{o}) \\
&\geq &k\delta _{o}\sum_{i_{1},...i_{k}\in \Lambda
}f_{x_{i_{1}}...x_{i_{k}}}^{2}(x_{o})\rho (i_{1},...,i_{k})^{2} \\
&&-C_{k}\left[ \sum_{i_{1},...,i_{k}\in \Lambda
}f_{x_{i_{1}}...x_{i_{k}}}^{2}(x_{o})\rho (i_{1},...,i_{k})^{2}\right]
^{1\backslash 2}.
\end{eqnarray*}%
If 
\begin{equation*}
\sum_{i_{1},...,i_{k}\in \Lambda }f_{x_{i_{1}}...x_{i_{k}}}^{2}(x_{o})\rho
(i_{1},...,i_{k})^{2}=0
\end{equation*}%
then there is nothing to prove, otherwise we have, after using
Cauchy-Schwartz and dividing by 
\begin{equation*}
\sum_{i_{1},...,i_{k}\in \Lambda }f_{x_{i_{1}}...x_{i_{k}}}^{2}(x_{o})\rho
(i_{1},...,i_{k})^{2},
\end{equation*}%
\begin{eqnarray*}
&&\left( \sum_{i_{1},...i_{k}\in \Lambda
}f_{x_{i_{1}}...x_{i_{k}}}^{2}(x_{o})\rho (i_{1},...,i_{k})^{2}\right) ^{1/2}
\\
&\leq &\frac{1}{k\delta _{o}}\left( \sum_{i_{1},...,i_{k}\in \Lambda
}g_{x_{i_{1}}...x_{i_{k}}}^{2}(x_{o})\right) ^{1/2}+C_{k,g} \\
&\leq
&C_{k,g}.\;\;\;\;\;\;\;\;\;\;\;\;\;\;\;\;\;\;\;\;\;\;\;\;\;\;\;\;\;\;\;\;\;%
\;\blacksquare
\end{eqnarray*}

\section{Relaxing the Assumptions of Compact support}

As in [8], we consider the family cutoff functions 
\begin{equation}
\chi =\chi _{\varepsilon }
\end{equation}%
$(\varepsilon \in \lbrack 0,1])$ in $\mathcal{C}_{o}^{\infty }(\mathbb{R})$
with value in $[0,1]$ such that%
\begin{equation*}
\left\{ 
\begin{array}{c}
\chi =1\text{ \ \ \ \ \ \ \ \ \ \ \ \ \ \ \ \ for }\left| t\right| \leq
\varepsilon ^{-1} \\ 
\left| \chi ^{(k)}(t)\right| \leq C_{k}\dfrac{\varepsilon }{\left| t\right|
^{k}}\text{ \ \ \ \ \ \ \ \ \ \ \ \ \ \ \ \ for }k\in \mathbb{N\ }.\text{\ \
\ \ \ \ \ \ \ \ \ \ \ \ \ \ \ \ \ \ \ \ \ }%
\end{array}%
\right.
\end{equation*}%
We then introduce%
\begin{equation}
\Psi _{\varepsilon }(x)=\chi _{\varepsilon }(\left| x\right| )\Psi (x)\text{
\ \ \ \ }x\in \mathbb{R}^{\Lambda }
\end{equation}%
and 
\begin{equation}
g_{\varepsilon }(x)=\chi _{\varepsilon }(\left| x\right| )g(x)\text{ \ \ \ \
\ \ }x\in \mathbb{R}^{\Gamma }.
\end{equation}%
A straightforward computation (see [66]) shows that $\Psi _{\varepsilon }(x)$
and $g_{\varepsilon }(x).$satisfy 
\begin{equation}
\left| \partial ^{\alpha }\mathbf{\nabla }\Psi _{\varepsilon }\right| \leq
C_{\alpha }+\mathcal{O}_{\alpha ,\Lambda }(\varepsilon ),\;\ \ \ \ \ \forall
\alpha \in \mathbb{N}^{\left| \Lambda \right| }.
\end{equation}%
and%
\begin{equation}
\left| \partial ^{\alpha }\mathbf{\nabla }g_{\varepsilon }\right| \leq
C_{\alpha }+\mathcal{O}_{\alpha ,\Lambda }(\varepsilon ),\;\ \ \ \ \ \forall
\alpha \in \mathbb{N}^{\left| \Gamma \right| },
\end{equation}%
and that%
\begin{equation}
M^{-1}\mathbf{Hess}\Phi _{\varepsilon }(x)M\geq \delta ^{\prime
},\;\;\;\;\;\;\;\;0<\delta ^{\prime }<1.\text{ \ \ in the sense of (17)}
\end{equation}%
It then only remains to check that 
\begin{equation}
\sum_{j,i_{1},...,i_{k}\in \Lambda }\Psi _{\varepsilon
_{x_{j}x_{i_{1}}...x_{i_{k}}}}^{2}(x)e^{2\kappa d(\left\{
i_{1},...,i_{k}\right\} ,S_{g})}\leq C_{k}+\text{ }\mathcal{O}_{k,\Lambda
}(\varepsilon )\;\;\;\;\;\;\;\forall x\in \mathbb{R}^{\Lambda },\forall
k\geq 2
\end{equation}%
where $C_{k}$ is a positive constant that does not depend on $f$ and $%
\Lambda $.%
\begin{equation*}
\Psi _{\varepsilon }(x)=\chi _{\varepsilon }(r)\Psi (x)
\end{equation*}%
Let $\alpha $ be such that $\left| \alpha \right| \geq 3.$ Using Leibniz's
formula, we have%
\begin{eqnarray}
\left| \partial ^{\alpha }\Psi _{\varepsilon }\right| &\leq &\sum_{\beta
\leq \alpha }\dbinom{\alpha }{\beta }\left| \partial ^{\beta }\chi
_{\varepsilon }(r)\partial ^{\alpha -\beta }\Psi \right| \\
&\leq &\left| \partial ^{\alpha }\chi _{\varepsilon }(r)\Psi \right| +\left|
\partial ^{\alpha }\Psi \right| +\sum_{\substack{ \beta <\alpha  \\ \beta
\neq 0}}\dbinom{\alpha }{\beta }\left| \partial ^{\beta }\chi _{\varepsilon
}(r)\partial ^{\alpha -\beta }\Psi \right| .
\end{eqnarray}%
Assuming that $\Psi (0)=0$ and write 
\begin{equation*}
\Psi (x)=\int_{0}^{1}x\cdot \mathbf{\nabla }\Psi (sx)ds
\end{equation*}%
\begin{eqnarray*}
\left| \partial ^{\alpha }\chi _{\varepsilon }(r)\Psi (x)\right| &\leq
&\sum_{j_{1}\in \Lambda }\int_{0}^{1}\left| x_{j_{1}}\partial ^{\alpha }\chi
_{\varepsilon }(r)\Psi _{x_{j_{1}}}(sx)\right| ds \\
&\leq &C\left| r\partial ^{\alpha }\chi _{\varepsilon }(r)\right| .
\end{eqnarray*}%
Now using the fact that 
\begin{equation*}
r\partial ^{\alpha }\chi _{\varepsilon }(r)=\mathcal{O}_{\alpha
}(\varepsilon ),
\end{equation*}%
we have 
\begin{equation*}
\left| \partial ^{\alpha }\chi _{\varepsilon }(r)\Psi (x)\right| =\mathcal{O}%
_{\alpha ,\Lambda }(\varepsilon ).
\end{equation*}%
Finally, using the fact that 
\begin{equation}
\partial ^{\beta }\chi _{\varepsilon }(r)=\mathcal{O}_{\beta }(\varepsilon
)\;\ \ \text{for every }\left| \beta \right| \geq 1\text{,}
\end{equation}%
it is then easy to see that 
\begin{equation}
\left( \sum_{\substack{ \beta <\alpha  \\ \beta \neq 0}}\dbinom{\alpha }{%
\beta }\left| \partial ^{\beta }\chi _{\varepsilon }(r)\partial ^{\alpha
-\beta }\Psi \right| \right) ^{2}=\mathcal{O}_{\alpha ,\Lambda }(\varepsilon
).
\end{equation}%
Thus%
\begin{equation}
\sum_{j,i_{1},...,i_{k}\in \Lambda }\Psi _{\varepsilon
_{x_{j}x_{i_{1}}...x_{i_{k}}}}^{2}(x)e^{2\kappa d(\left\{
i_{1},...,i_{k}\right\} ,S_{g})}\leq C_{k,g}+\text{ }\mathcal{O}_{k,\Lambda
}(\varepsilon )\;\;\;\;\;\;\;\forall x\in \mathbb{R}^{\Lambda },\forall
k\geq 2.
\end{equation}

Now using the arguments developed in [8] (see also [66]) about the
convergence of the corresponding solutions as $\varepsilon \rightarrow 0,$
we obtain:

\begin{proposition}
\textit{If }$g(0)=\Psi (0)=0,$ \textit{then Proposition 2 holds without the
assumptions of compact support on }$\Psi $ \textit{and }$g.$
\end{proposition}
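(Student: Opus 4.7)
My plan is to invoke Proposition~2 on the compactly supported truncations $\Psi_\varepsilon$ and $g_\varepsilon$ already introduced above, and then let $\varepsilon \to 0$. Concretely, set $\Phi_\varepsilon(x) = \tfrac{x^2}{2} + \Psi_\varepsilon(x)$ and let $f_\varepsilon$ be the unique $C^\infty$ solution of
\[
-\Delta f_\varepsilon + \nabla \Phi_\varepsilon \cdot \nabla f_\varepsilon = g_\varepsilon - \langle g_\varepsilon \rangle_{\varepsilon}, \qquad \langle f_\varepsilon \rangle_{\varepsilon} = 0,
\]
where $\langle \cdot \rangle_\varepsilon$ denotes the mean against $e^{-\Phi_\varepsilon}dx$. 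Existence and smoothness come from Theorem~1 applied to the truncated data, and the proposition will follow once uniform-in-$\varepsilon$ weighted estimates on $f_\varepsilon$ are obtained and then transferred to $f$.

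The first task is to verify, uniformly in $\varepsilon \in (0,1]$, that the triple $(\Psi_\varepsilon, g_\varepsilon, \Phi_\varepsilon)$ meets all the hypotheses of Proposition~2 with constants independent of $\varepsilon$ and $\Lambda$. This is precisely what the excerpt has done: estimates (25)--(27) deliver the derivative bounds on $\nabla \Psi_\varepsilon$ and $\nabla g_\varepsilon$ together with the uniform convexity of $\mathbf{Hess}\,\Phi_\varepsilon$ in the sense of (17), while (32) supplies the exponentially weighted bound (19) for the higher derivatives of $\Phi_\varepsilon$. The hypotheses $\Psi(0) = 0$ and $g(0) = 0$ enter exactly here: they license the representations $\Psi(x) = \int_0^1 x \cdot \nabla\Psi(sx)\,ds$ and the analogous one for $g$, which is what upgrades $\partial^\alpha\chi_\varepsilon(r)\Psi(x)$ from merely bounded to $\mathcal{O}(\varepsilon)$ on the annular region where derivatives of $\chi_\varepsilon$ live. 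With this in hand, Proposition~2 yields, for every $k \geq 1$ and every $x \in \mathbb{R}^\Lambda$,
\[
\sum_{i_1,\ldots,i_k \in \Lambda} (f_\varepsilon)_{x_{i_1}\cdots x_{i_k}}^{2}(x)\, e^{2\kappa d(\{i_1,\ldots,i_k\}, S_g)} \leq C_{k,g} + \mathcal{O}_{k,\Lambda}(\varepsilon),
\]
with $C_{k,g}$ independent of $\Lambda$ and $f_\varepsilon$.

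The final step is the passage to the limit $\varepsilon \to 0$. The convergence arguments of [8] and [66] show that $f_\varepsilon$ converges to the solution $f$ of the untruncated problem together with all derivatives on compact subsets of $\mathbb{R}^\Lambda$; applying Fatou's lemma to the weighted sum above, and absorbing the $\mathcal{O}(\varepsilon)$ correction into $C_{k,g}$, then gives the desired estimate for $f$. The main obstacle I anticipate is justifying this termwise passage in the infinite sum indexed by $\Lambda$-tuples, since the exponential weight blows up as the indices leave $S_g$; however the uniform upper bound is exactly the input Fatou needs, so the only residual care concerns the consistency of the normalizing constants $\langle g_\varepsilon \rangle_\varepsilon \to \langle g \rangle$ and the stability of the elliptic equation under truncation, both of which are the content of the convergence results already cited.
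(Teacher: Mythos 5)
Your proposal follows essentially the same route as the paper: truncate $\Psi$ and $g$ by the cutoffs $\chi_\varepsilon$, verify the hypotheses of Proposition 2 uniformly in $\varepsilon$ up to $\mathcal{O}_{k,\Lambda}(\varepsilon)$ errors (correctly identifying that $\Psi(0)=g(0)=0$ is what makes $\partial^\alpha\chi_\varepsilon(r)\Psi$ small via the integral representation), and then invoke the convergence results of [8] and [66] to pass to the limit. The only cosmetic difference is your appeal to Fatou's lemma, which is unnecessary since $\Lambda$ is finite and the weighted sum has finitely many terms, so pointwise convergence of the derivatives of $f_\varepsilon$ suffices.
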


\section{The n-Point Correlation Functions}

The higher order correlation is defined as 
\begin{equation}
\left\langle g_{1},...,g_{k}\right\rangle :=\left\langle \left(
g_{1}-\left\langle g_{1}\right\rangle \right) ...\left( g_{k}-\left\langle
g_{k}\right\rangle \right) \right\rangle .
\end{equation}%
For simplicity we shall take $k=3$ and $\Phi $ is as in proposition 2.

Let $g_{1},g_{2},$ and $g_{3}$ be smooth functions satisfying $\left(
15\right) $ and $f_{i}$ $i=1,2,3$ shall denote the unique solution of the
system%
\begin{equation}
\left\{ 
\begin{tabular}{l}
$-\mathbf{\Delta }f_{i}+\mathbf{\nabla }\Phi \cdot \mathbf{\nabla }%
f_{i}=g_{i}-\left\langle g_{i}\right\rangle _{_{L^{2}(\mu )}}$ \\ 
$\left\langle f_{i}\right\rangle _{L^{2}(\mu )}=0.$%
\end{tabular}%
\right.
\end{equation}%
Recall that 
\begin{equation*}
\mathbf{\nabla }f_{i}=A_{\Phi }^{\left( 1\right) ^{-1}}\mathbf{\nabla }g_{i}.
\end{equation*}%
For an arbitrary smooth function $c,$ it is easy to see that 
\begin{equation*}
\left\langle c(x)\left( g_{i}-\left\langle g_{i}\right\rangle \right)
\right\rangle =\left\langle \mathbf{\nabla }f_{i}\cdot \mathbf{\nabla }%
c\right\rangle .
\end{equation*}%
A direct computation shows that 
\begin{eqnarray*}
\left\langle g_{1},g_{2,}g_{3}\right\rangle &=&\left\langle \mathbf{\nabla }%
f_{3}\cdot \left( \mathbf{Hess}f_{1}\right) \mathbf{\nabla }%
g_{2}\right\rangle +\left\langle \mathbf{\nabla }f_{3}\cdot \left( \mathbf{%
Hess}g_{2}\right) \mathbf{\nabla }f_{1}\right\rangle \\
&&+\left\langle \mathbf{\nabla }f_{2}\cdot \left( \mathbf{Hess}f_{1}\right) 
\mathbf{\nabla }g_{3}\right\rangle +\left\langle \mathbf{\nabla }f_{2}\cdot
\left( \mathbf{Hess}g_{3}\right) \mathbf{\nabla }f_{1}\right\rangle .
\end{eqnarray*}%
Let us now estimate each term of the right and side of this equality.

Using Cauchy-Schwartz, and proposition 2, it is easy to see that 
\begin{equation*}
\left| \left\langle \mathbf{\nabla }f_{3}\cdot \left( \mathbf{Hess}%
f_{1}\right) \mathbf{\nabla }g_{2}\right\rangle \right| \leq Ce^{-\kappa
_{1}d\left( S_{g_{2}},S_{g_{1}}\right) }
\end{equation*}%
\begin{equation*}
\left| \left\langle \mathbf{\nabla }f_{3}\cdot \left( \mathbf{Hess}%
g_{2}\right) \mathbf{\nabla }f_{1}\right\rangle \right| \leq Ce^{-\kappa
_{1}d\left( S_{g_{2}},S_{g_{1}}\right) },
\end{equation*}%
\begin{equation*}
\left| \left\langle \mathbf{\nabla }f_{2}\cdot \left( \mathbf{Hess}%
f_{1}\right) \mathbf{\nabla }g_{3}\right\rangle \right| \leq Ce^{-\kappa
_{1}d\left( S_{g_{3}},S_{g_{1}}\right) }
\end{equation*}%
and 
\begin{equation*}
\left| \left\langle \mathbf{\nabla }f_{2}\cdot \left( \mathbf{Hess}%
g_{3}\right) \mathbf{\nabla }f_{1}\right\rangle \right| \leq Ce^{-\kappa
_{1}d\left( S_{g_{3}},S_{g_{1}}\right) }
\end{equation*}%
Here the constants $C$ only depends on the size of the support of the $g_{i}$%
's$.$ and $\kappa _{1}>0.$

Thus 
\begin{equation*}
\left| \left\langle g_{1},g_{2,}g_{3}\right\rangle \right| \leq C\left[
e^{-\kappa _{1}d\left( S_{g_{2}},S_{g_{1}}\right) }+e^{-\kappa _{1}d\left(
S_{g_{3}},S_{g_{1}}\right) }\right]
\end{equation*}%
If $g_{1}=x_{i}$, $g_{2}=x_{j},$ and $g_{3}=x_{k},$ we obtain%
\begin{equation*}
\left| \left\langle \left( x_{i}-\left\langle x_{i}\right\rangle \right)
\left( x_{j}-\left\langle x_{j}\right\rangle \right) \left(
x_{k}-\left\langle x_{k}\right\rangle \right) \right\rangle \right| \leq C 
\left[ e^{-\kappa _{1}d\left( i,j\right) }+e^{-\kappa _{1}d\left( i,k\right)
}\right] .
\end{equation*}%
Thus if $d>1,$ we obtain this weak exponential decay of the truncated
correlations in the sense that the exponential decay occurs as you
simultaneously pull the spins away from a fixed one. Note that in the one
dimensional case, we obtain a stronger exponential decay due to the fact
that 
\begin{equation*}
i\leq j\leq k\Longrightarrow d(i,k)=d(i,j)+d(j,k).
\end{equation*}%
This was already pointed out in [8].

\section{The Analyticity of the Pressure}

In this section, we attempt to study a direct method for the analyticity of
the pressure for certain classical convex unbounded spin systems. It is
central in Statistical Mechanics to study the differentiability or even the
analyticity of the pressure with respect to some distinguished thermodynamic
parameters such as temperature, chemical potential or external field. In
fact the analytic behavior of the pressure is the classical thermodynamic
indicator for the absence or existence of phase transition. The most famous
result on the analyticity of the pressure is the circle theorem of Lee and
Yang [28]. This theorem asserts the following: consider a $\left\{
-1,1\right\} -$valued spin system with ferromagnetic pair interaction and
external field $h$ and regard the quantity $z=e^{h}$ as a complex parameter,
then all zeroes of all partition functions (with free boundary condition),
considered as functions of $z$ lie in the complex unit circle. This theorem
readily implies that the pressure is an analytic function of $h$ in the
region $h>0$ and $h<0.$ Heilmann [29] showed that the assumption of pair
interaction is necessary. A transparent approach to the circle theorem was
found by Asano [30] and developed further by Ruelle [31],[32], Slawny [33],
and Gruber et al [34]. Griffiths [35] and Griffiths-Simon [36] found a
method of extending the Lee-Yang theorem to real-valued spin systems with a
particular type of a priory measure. Newman [37] proved the Lee-Yang theorem
for every a priory measure which satisfies this theorem in the particular
case of no interaction. Dunlop [38],[39] studied the zeroes of the partition
functions for the plane rotor model. A general Lee-Yang theorem for
multicomponent systems was finally proved by Lieb and Sokal [40]. For
further references see Glimm and Jaffe [41].

The Lee-Yang theorem and its variants depend on the ferromagnetic character
of the interaction. There are various other way of proving the infinite
differentiability or the analyticity of the pressure for (ferromagnetic and
non ferromagnetic) systems at high temperatures, or at low temperatures, or
at large external fields. Most of these take advantage of a sufficiently
rapid decay of correlations and /or cluster expansion methods. Here is a
small sample of relevant references. Bricmont, Lebowitz and Pfister [42],
Dobroshin [43], Dobroshin and Sholsman [44],[45], Duneau et al
[46],[47],[48], Glimm and Jaffe [41],[49], Israel [50], Kotecky and Preiss
[51], Kunz [52], Lebowitz [53],[54], Malyshev [55], Malychev and Milnos [56]
and Prakash [57]. M. Kac and J.M. Luttinger [58] obtained a formula for the
pressure in terms of irreducible distribution functions.

We propose a new way of analyzing the analyticity of the pressure for
certain unbounded models through a representation by means of the Witten
Laplacians of the coefficients in the Taylor series expansion. The methods
known up to now rely on complicated indirect arguments.

\section{Towards the analyticity of the Pressure}

Let $\Lambda $\ be a finite domain in $\mathbb{Z}^{d}\;(d\geq 1)$ and
consider the Hamiltonian of the phase space given by,%
\begin{equation}
\Phi (x)=\Phi _{\Lambda }(x)=\frac{x^{2}}{2}+\Psi (x),\;\ \ \ \ \;x\in 
\mathbb{R}^{\Lambda }.
\end{equation}%
where%
\begin{equation}
\left| \partial ^{\alpha }\mathbf{\nabla }\Psi \right| \leq C_{\alpha },\;\
\ \ \ \ \forall \alpha \in \mathbb{N}^{\left| \Lambda \right| },
\end{equation}%
\begin{equation}
\mathbf{Hess}\Phi (x)\geq \delta _{o},\;\;\;\;\;\;\;\;\;\;0<\delta _{o}<1.
\end{equation}%
Let $g$ is a smooth function on $\mathbb{R}^{\Gamma }$ with lattice support $%
S_{g}=\Gamma .$ We identified with $\tilde{g}$ defined on $\mathbb{R}%
^{\Lambda }$ by 
\begin{equation}
\tilde{g}(x)=g(x_{\Gamma })\text{ \ \ where }x=\left( x_{i}\right) _{i\in
\Lambda }\text{ \ and }x_{\Gamma }=\left( x_{i}\right) _{i\in \Gamma }
\end{equation}%
and satisfying%
\begin{equation}
\left| \partial ^{\alpha }\mathbf{\nabla }g\right| \leq C_{\alpha
}\;\;\;\;\;\;\;\;\forall \alpha \in \mathbb{N}^{\left| \Gamma \right| }
\end{equation}%
Under the additional assumptions that $\Psi $ is compactly supported in $%
\mathbb{R}^{\Lambda }$ and $g$ is compactly supported in $\mathbb{R}^{\Gamma
},$ it was proved in [66] (see also [8]) that the equation 
\begin{equation*}
\left\{ 
\begin{tabular}{l}
$-\mathbf{\Delta }f+\mathbf{\nabla }\Phi \cdot \mathbf{\nabla }%
f=g-\left\langle g\right\rangle $ \\ 
$\left\langle f\right\rangle _{L^{2}(\mu )}=0$%
\end{tabular}%
\right.
\end{equation*}%
has a unique smooth solution satisfying $\mathbf{\nabla }^{k}f(x)\rightarrow
0$ as $\left| x\right| \rightarrow \infty $ for every $k\geq 1$.

Recall also that $\mathbf{\nabla }f$ is a solution of the system%
\begin{equation}
\left( -\mathbf{\Delta +\nabla }\Phi \cdot \mathbf{\nabla }\right) \mathbf{%
\nabla }f+\mathbf{Hess}\Phi \mathbf{\nabla }f=\mathbf{\nabla }g\;\;\;\;\text{%
in }\;\mathbb{R}^{\Lambda }.
\end{equation}%
As in [66] and [8], these assumptions will be relaxed later on.

Let 
\begin{equation}
\Phi _{\Lambda }^{t}(x)=\Phi (x)-tg(x),
\end{equation}%
where $x=(x_{i})_{i\in \Lambda }$, and assume additionally that $g$
satisfies 
\begin{equation}
\mathbf{Hess}g\leq C.
\end{equation}%
We consider the following perturbation \ 
\begin{equation}
\theta _{\Lambda }(t)=\log \left[ \int dxe^{-\Phi _{\Lambda }^{t}(x)}\right]
.
\end{equation}%
Denote by 
\begin{equation}
Z_{t}=\int dxe^{-\Phi _{\Lambda }^{t}(x)}
\end{equation}%
and 
\begin{equation}
<\cdot >_{t,\Lambda }=\frac{\int \cdot \;dxe^{-\Phi _{\Lambda }^{t}(x)}}{%
Z_{t}}.
\end{equation}

\section{Parameter Dependency of the Solution}

From the assumptions made on $\Phi $ and $g,$ it is easy to see that there
exists $T>0$ such that or every $t\in \lbrack 0,T)$, $\Phi _{\Lambda
}^{t}(x) $ satisfies all the assumptions required for the solvability,
regularity and asymptotic behavior of the solution $f(t)$ associated with
the potential $\Phi _{\Lambda }^{t}(x).$ Thus, each $t\in \lbrack 0,T)$ is
associated with a unique $C^{\infty }-$solution, $f(t)$ of the equation 
\begin{equation*}
\left\{ 
\begin{tabular}{l}
$A_{\Phi _{\Lambda }^{t}}^{(0)}f(t)=g-\left\langle g\right\rangle
_{_{L^{2}(\mu )}}$ \\ 
$\left\langle f(t)\right\rangle _{L^{2}(\mu )}=0.$%
\end{tabular}%
\right.
\end{equation*}%
Hence,%
\begin{equation}
A_{\Phi _{\Lambda }^{t}}^{(1)}\mathbf{v}(t)=\mathbf{\nabla }g
\end{equation}%
where $\mathbf{v}(t)=\mathbf{\nabla }f(t).$ Notice that the map 
\begin{equation*}
t\longmapsto \mathbf{v}(t)
\end{equation*}%
is well defined and 
\begin{equation*}
\left\{ \mathbf{v}(t):t\in \lbrack 0,T)\right\}
\end{equation*}%
is a family of smooth solutions on $\mathbb{R}^{\Lambda }$ satisfying 
\begin{equation*}
\partial ^{\alpha }\mathbf{v}(t)\rightarrow 0\;\;\;\text{as }\left| x\right|
\rightarrow \infty \ \ \ \ \ \ \ \forall \alpha \in \mathbb{N}^{\left|
\Lambda \right| }\ \text{\ and for each }t\in \lbrack 0,T)
\end{equation*}%
and corresponding to the family of potential 
\begin{equation}
\left\{ \Phi _{\Lambda }^{t}:t\in \lbrack 0,T)\right\} .
\end{equation}%
Let us now verify that $\mathbf{v}$ is a smooth function of $t\in (0,T).$We
need to prove that for each $t\in (0,T),$ the limit%
\begin{equation*}
\lim\limits_{\varepsilon \rightarrow 0}\frac{\mathbf{v}(t+\varepsilon )-%
\mathbf{v}(t)}{\varepsilon }
\end{equation*}%
exists. Let 
\begin{equation*}
\mathbf{v}^{\varepsilon }(t)=\frac{\mathbf{v}(t+\varepsilon )-\mathbf{v}(t)}{%
\varepsilon }.
\end{equation*}%
We use a technique based on regularity estimates to get a uniform control of 
$\mathbf{v}^{\varepsilon }(t)$ with respect to $\varepsilon .$

With $\varepsilon $ small enough, we have%
\begin{eqnarray*}
0 &=&-\mathbf{\Delta }\left[ \dfrac{\mathbf{v}(t+\varepsilon )-\mathbf{v}(t)%
}{\varepsilon }\right] +\dfrac{\mathbf{\nabla }\Phi ^{t+\varepsilon }\cdot 
\mathbf{\nabla v}(t+\varepsilon )-\mathbf{\nabla }\Phi ^{t}\cdot \mathbf{%
\nabla v}(t)}{\varepsilon } \\
&&+\dfrac{\mathbf{Hess}\Phi ^{t+\varepsilon }\mathbf{v}(t+\varepsilon )-%
\mathbf{Hess}\Phi ^{t}\mathbf{v}(t)}{\varepsilon }.
\end{eqnarray*}%
Equivalently,%
\begin{eqnarray*}
&&-\mathbf{\Delta }\left[ \dfrac{\mathbf{v}(t+\varepsilon )-\mathbf{v}(t)}{%
\varepsilon }\right] +\dfrac{\mathbf{\nabla }\Phi ^{t+\varepsilon }\cdot 
\mathbf{\nabla }\left[ \mathbf{v}(t+\varepsilon )-\mathbf{v}(t)\right] }{%
\varepsilon } \\
&&+\mathbf{Hess}\Phi ^{t+\varepsilon }\left( \dfrac{\mathbf{v}(t+\varepsilon
)-\mathbf{v}(t)}{\varepsilon }\right) \\
&=&-\left( \dfrac{\mathbf{Hess}\Phi ^{t+\varepsilon }-\mathbf{Hess}\Phi ^{t}%
}{\varepsilon }\right) \mathbf{v}(t)-\left( \dfrac{\mathbf{\nabla }\Phi
^{t+\varepsilon }-\mathbf{\nabla }\Phi ^{t}}{\varepsilon }\right) \cdot 
\mathbf{\nabla v}(t) \\
&=&\mathbf{Hess}g\mathbf{v}(t)+\mathbf{\nabla }g\cdot \mathbf{\nabla v}(t)
\end{eqnarray*}%
and%
\begin{eqnarray*}
&&-\mathbf{\Delta v}^{\varepsilon }(t)+\mathbf{\nabla }\Phi ^{t+\varepsilon
}\cdot \mathbf{\nabla v}^{\varepsilon }(t)+\mathbf{Hess}\Phi ^{t+\varepsilon
}\mathbf{v}^{\varepsilon }(t) \\
&=&\mathbf{Hess}g\mathbf{v}(t)+\mathbf{\nabla }g\cdot \mathbf{\nabla v}(t)
\end{eqnarray*}%
Let $\mathbf{w}(t)$ be the unique $C^{\infty }-$solution of the system 
\begin{equation}
-\mathbf{\Delta w}(t)+\mathbf{\nabla }\Phi ^{t}\cdot \mathbf{\nabla w}(t)+%
\mathbf{Hess}\Phi ^{t}\mathbf{w}(t)=\mathbf{Hess}g\mathbf{v}(t)+\mathbf{%
\nabla }g\cdot \mathbf{\nabla v}(t).
\end{equation}%
Recall that the unitary transformation $U_{\Phi ^{t+\varepsilon },}$ allows
us to reduce%
\begin{equation}
\begin{array}{c}
-\mathbf{\Delta v}^{\varepsilon }(t)+\mathbf{\nabla }\Phi ^{t+\varepsilon
}\cdot \mathbf{\nabla v}^{\varepsilon }(t)+\mathbf{Hess}\Phi ^{t+\varepsilon
}\mathbf{v}^{\varepsilon }(t) \\ 
=\mathbf{Hess}g\mathbf{v}(t)+\mathbf{\nabla }g\cdot \mathbf{\nabla v}(t)%
\end{array}%
\end{equation}%
into%
\begin{equation}
\left. 
\begin{array}{c}
\left( \mathbf{-\Delta +}\dfrac{\left| \mathbf{\nabla }\Phi ^{t+\varepsilon
}\right| ^{2}}{4}-\dfrac{\mathbf{\Delta }\Phi ^{t+\varepsilon }}{2}\right) 
\mathbf{V}^{\varepsilon }+\mathbf{Hess}\Phi ^{t+\varepsilon }\mathbf{V}%
^{\varepsilon }= \\ 
\left[ \mathbf{Hess}g\mathbf{v}(t)+\mathbf{\nabla }g\cdot \mathbf{\nabla v}%
(t)\right] e^{-\Phi ^{t+\varepsilon }/2}%
\end{array}%
\right.
\end{equation}%
where $\mathbf{V}^{\varepsilon }=\mathbf{v}^{\varepsilon }(t)e^{-\Phi
^{t+\varepsilon }/2}.$

\begin{remark}
This unitary transformation already mentioned in the introduction was
introduced in the proof of the existence of solution (see [66] ) to avoid
working with the weighted spaces $L^{2}(\mathbb{R}^{\Lambda },\mathbb{R}%
^{\Lambda },e^{-\Phi }dx).$ The proof was based on Hilbert space method. The
method consists of determining an appropriate function space and an operator
which is a natural realization of the problem. In this particular problem,
the function spaces to be considered are the Sobolev spaces $B_{\Phi }^{k}(%
\mathbb{R}^{\Lambda })$ defined by 
\begin{equation*}
B_{\Phi }^{k}(\mathbb{R}^{\Lambda })=\left\{ u\in L^{2}(\mathbb{R}^{\Lambda
}):Z_{\Phi }^{\ell }\partial ^{\alpha }u\in L^{2}(\mathbb{R}^{\Lambda
})\;\forall \;\ell +\left| \alpha \right| \leq k\right\} .
\end{equation*}%
where 
\begin{equation}
Z_{\Phi }=\frac{\left| \mathbf{\nabla }\Phi \right| }{2}
\end{equation}%
These are subspaces of the well known Sobolev spaces $W^{k,2}(\mathbb{R}%
^{\Lambda }),$ $k$ $\in \mathbb{N}$.
\end{remark}

Taking scalar product with $\mathbf{V}^{\varepsilon }$ on both sides of
(51), we get%
\begin{equation}
\left. 
\begin{array}{c}
\left\| \left( \mathbf{\nabla +}\dfrac{\mathbf{\nabla }\Phi ^{t+\varepsilon }%
}{2}\right) \mathbf{V}^{\varepsilon }\right\| _{L^{2}}^{2}+\int \mathbf{Hess}%
\Phi ^{t+\varepsilon }\mathbf{V}^{\varepsilon }\cdot \mathbf{V}^{\varepsilon
}dx= \\ 
\int \left[ \mathbf{Hess}g\mathbf{v}(t)+\mathbf{\nabla }g\cdot \mathbf{%
\nabla v}(t)\right] e^{-\Phi ^{t+\varepsilon }/2}\cdot \mathbf{V}%
^{\varepsilon }dx.%
\end{array}%
\right.
\end{equation}%
Now using the uniform strict convexity on the left hand side and
Cauchy-Schwartz on the right hand side, we obtain 
\begin{equation}
\left\| \mathbf{V}^{\varepsilon }\right\| _{B^{0}}\leq C_{t}\text{ \ \ \ \
for small enough }\varepsilon .
\end{equation}%
We then deduce that 
\begin{equation}
\left( \mathbf{-\Delta +}\frac{\left| \mathbf{\nabla }\Phi ^{t+\varepsilon
}\right| ^{2}}{4}\right) \mathbf{V}^{\varepsilon }=\tilde{q}_{\varepsilon }
\end{equation}%
where%
\begin{equation}
\tilde{q}_{\varepsilon }=\left[ \mathbf{Hess}g\mathbf{v}(t)+\mathbf{\nabla }%
g\cdot \mathbf{\nabla v}(t)\right] e^{-\Phi ^{t+\varepsilon }/2}\mathbf{+}%
\dfrac{\mathbf{\Delta }\Phi ^{t+\varepsilon }}{2}\mathbf{V}^{\varepsilon }-%
\mathbf{Hess}\Phi ^{t+\varepsilon }\mathbf{V}^{\varepsilon }
\end{equation}%
is bounded in $B^{0}$ uniformly with respect to $\varepsilon $ for $%
\varepsilon $ small enough.

Taking again scalar product with $\mathbf{V}^{\varepsilon }$ on both sides
of $\left( 55\right) $ and integrating by parts, we obtain%
\begin{equation}
\left\| \mathbf{\nabla V}^{\varepsilon }\right\| _{L^{2}}^{2}+\left\| \frac{%
\left| \mathbf{\nabla }\Phi ^{t+\varepsilon }\right| }{2}\mathbf{V}%
^{\varepsilon }\right\| _{L^{2}}^{2}\leq \left\| \tilde{q}_{\varepsilon
}\right\| _{L^{2}}\left\| \mathbf{V}^{\varepsilon }\right\| _{L^{2}}
\end{equation}%
It follows that $\mathbf{V}^{\varepsilon }$ is uniformly bounded with
respect to $\varepsilon $ in $B_{\Phi ^{t+\varepsilon }}^{1}$ for $%
\varepsilon $ small enough.

Next, observe that 
\begin{equation}
\left( \mathbf{-\Delta +}\frac{\left| \mathbf{\nabla }\Phi ^{t}\right| ^{2}}{%
4}\right) \mathbf{V}^{\varepsilon }=\hat{q}_{\varepsilon }
\end{equation}%
where 
\begin{eqnarray}
\hat{q}_{\varepsilon } &=&\tilde{q}_{\varepsilon }-\frac{\left| \mathbf{%
\nabla }\Phi ^{t+\varepsilon }-\mathbf{\nabla }\Phi ^{t}\right| ^{2}}{4}%
\mathbf{V}^{\varepsilon }+\frac{\left( \mathbf{\nabla }\Phi ^{t+\varepsilon
}-\mathbf{\nabla }\Phi ^{t}\right) \cdot \mathbf{\nabla }\Phi ^{t}}{2}%
\mathbf{V}^{\varepsilon } \\
&=&\tilde{q}_{\varepsilon }-\frac{\varepsilon ^{2}\left| \mathbf{\nabla }%
g\right| ^{2}}{4}\mathbf{V}^{\varepsilon }-\frac{\varepsilon \mathbf{\nabla }%
g\cdot \mathbf{\nabla }\Phi ^{t}}{2}\mathbf{V}^{\varepsilon }
\end{eqnarray}%
is uniformly bounded in $B^{0}$ with respect to $\varepsilon $ for small
enough $\varepsilon .$ Using \ regularity, it follows that for small enough $%
\varepsilon ,$ $\mathbf{V}^{\varepsilon }$ is uniformly bounded in $B_{\Phi
^{t}}^{2}$ with respect to $\varepsilon .$This implies that $\hat{q}%
_{\varepsilon }$ is uniformly bounded in $B_{\Phi ^{t}}^{1}$ for $%
\varepsilon $ small enough. Again, we can continue by a bootstrap argument
to consequently get that for $\varepsilon $ small enough, $\mathbf{V}%
^{\varepsilon }$ is uniformly bounded in $B_{\Phi ^{t}}^{k}$ with respect to 
$\varepsilon $ for any $k.$

Let 
\begin{equation*}
\mathbf{V=w}(t)e^{-\Phi ^{t}/2}.
\end{equation*}%
We have 
\begin{equation}
\left. 
\begin{array}{c}
\left( \mathbf{-\Delta +}\dfrac{\left| \mathbf{\nabla }\Phi ^{t}\right| ^{2}%
}{4}-\dfrac{\mathbf{\Delta }\Phi ^{t}}{2}\right) \mathbf{V}+\mathbf{Hess}%
\Phi ^{t}\mathbf{V} \\ 
=\left[ \mathbf{Hess}g\mathbf{v}(t)+\mathbf{\nabla }g\cdot \mathbf{\nabla v}%
(t)\right] e^{-\Phi ^{t}/2}%
\end{array}%
\right. .
\end{equation}%
Now combining this equation with (51), we obtain

\begin{equation}
\left. 
\begin{array}{c}
\left( \mathbf{-\Delta +}\dfrac{\left| \mathbf{\nabla }\Phi ^{t}\right| ^{2}%
}{4}-\dfrac{\mathbf{\Delta }\Phi ^{t}}{2}\right) \left( \mathbf{V}%
^{\varepsilon }-\mathbf{V}\right) +\mathbf{Hess}\Phi ^{t}\left( \mathbf{V}%
^{\varepsilon }-\mathbf{V}\right) \\ 
=-\left[ \mathbf{Hess}g\mathbf{v}(t)+\mathbf{\nabla }g\cdot \mathbf{\nabla v}%
(t)\right] e^{-\Phi ^{t}/2}+\left[ \mathbf{Hess}g\mathbf{v}(t)+\mathbf{%
\nabla }g\cdot \mathbf{\nabla v}(t)\right] e^{-\Phi ^{t+\varepsilon }/2} \\ 
+\left( \dfrac{\left| \mathbf{\nabla }\Phi ^{t}\right| ^{2}}{4}-\dfrac{%
\left| \mathbf{\nabla }\Phi ^{t+\varepsilon }\right| ^{2}}{4}\right) \mathbf{%
V}^{\varepsilon }-\left( \dfrac{\mathbf{\Delta }\Phi ^{t}}{2}-\dfrac{\mathbf{%
\Delta }\Phi ^{t+\varepsilon }}{2}\right) \mathbf{V}^{\varepsilon } \\ 
+\left( \mathbf{Hess}\Phi ^{t}-\mathbf{Hess}\Phi ^{t+\varepsilon }\right) 
\mathbf{V}^{\varepsilon }.%
\end{array}%
\right. .
\end{equation}
Now let us check that \ for small enough $\varepsilon ,$ the right hand side
of (62) is $\mathcal{O}(\varepsilon )$ in $B^{0}.$

For the first term, we have%
\begin{equation*}
\left. 
\begin{array}{c}
-\left[ \mathbf{Hess}g\mathbf{v}(t)+\mathbf{\nabla }g\cdot \mathbf{\nabla v}%
(t)\right] e^{-\Phi ^{t}/2}+\left[ \mathbf{Hess}g\mathbf{v}(t)+\mathbf{%
\nabla }g\cdot \mathbf{\nabla v}(t)\right] e^{-\Phi ^{t+\varepsilon }/2} \\ 
=\left[ \mathbf{Hess}g\mathbf{v}(t)+\mathbf{\nabla }g\cdot \mathbf{\nabla v}%
(t)\right] e^{-\Phi ^{t}/2}\left( e^{\varepsilon g/2}-1\right) \\ 
\sim \frac{\varepsilon }{2}\left[ \mathbf{Hess}g\mathbf{v}(t)+\mathbf{\nabla 
}g\cdot \mathbf{\nabla v}(t)\right] ge^{-\Phi ^{t}/2}%
\end{array}%
\right.
\end{equation*}%
Thus for $\varepsilon $ small enough%
\begin{equation*}
\left\| -\left[ \mathbf{Hess}g\mathbf{v}(t)+\mathbf{\nabla }g\cdot \mathbf{%
\nabla v}(t)\right] e^{-\Phi ^{t}/2}+\left[ \mathbf{Hess}g\mathbf{v}(t)+%
\mathbf{\nabla }g\cdot \mathbf{\nabla v}(t)\right] e^{-\Phi ^{t+\varepsilon
}/2}\right\| _{B^{0}}\leq C\varepsilon .
\end{equation*}%
For the second term, we have%
\begin{eqnarray*}
&&\left| \dfrac{\left| \mathbf{\nabla }\Phi ^{t}\right| ^{2}}{4}-\dfrac{%
\left| \mathbf{\nabla }\Phi ^{t+\varepsilon }\right| ^{2}}{4}\right| \\
&=&\frac{1}{4}\left( \left| \mathbf{\nabla }\Phi ^{t}\right| +\left| \mathbf{%
\nabla }\Phi ^{t+\varepsilon }\right| \right) \left( \left| \mathbf{\nabla }%
\Phi ^{t}\right| +\left| \mathbf{\nabla }\Phi ^{t+\varepsilon }\right|
\right) \\
&\leq &\frac{\varepsilon }{4}\left| \mathbf{\nabla }g\right| \left( \left| 
\mathbf{\nabla }\Phi ^{t}\right| +\left| \mathbf{\nabla }\Phi
^{t+\varepsilon }\right| \right) \\
&\leq &\frac{\varepsilon }{4}\left| \mathbf{\nabla }g\right| \left( 2\left| 
\mathbf{\nabla }\Phi ^{t}\right| +\varepsilon \left| \mathbf{\nabla }%
g\right| \right)
\end{eqnarray*}

Using now the fact that $\mathbf{V}^{\varepsilon }$ is uniformly bounded in $%
B_{\Phi ^{t}}^{k}$ with respect to $\varepsilon $ for any $k$, we see that
the second term of the right hand side of (62) is $\mathcal{O}(\varepsilon )$
in $B_{\Phi ^{t}}^{0}.$ The last two terms of the right hand side of (62)
are obviously $\mathcal{O}(\varepsilon )$ in $B_{\Phi ^{t}}^{0}.$

From the same regularity argument as above, we get that $\mathbf{V}%
^{\varepsilon }-\mathbf{V}$ is $\mathcal{O}(\varepsilon )$ in $B_{\Phi
^{t}}^{2}$. Again iterating the regularity argument, we obtain that for
small enough $\varepsilon ,$ $\mathbf{V}^{\varepsilon }-\mathbf{V}$ is $%
\mathcal{O}(\varepsilon )$ in $B_{\Phi ^{t}}^{k}$ for every $k.$ We have
proved:

\begin{proposition}
Under the above assumptions on $\Phi $ and $g,$ there exists $T>0$ so that
for each $t\in (0,T),$ $\mathbf{v}^{\varepsilon }(t)$ converges to $\mathbf{w%
}(t)$ in $C^{\infty }.$
\end{proposition}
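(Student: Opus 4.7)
The plan is to run a bootstrap argument on equation (62), which governs the difference $\mathbf{V}^{\varepsilon}-\mathbf{V}$, using the uniform $B^{k}$ bounds on $\mathbf{V}^{\varepsilon}$ already established and the fact that the right-hand side of (62) has been shown to be $\mathcal{O}(\varepsilon)$ in $B^{0}$. The output will be convergence of $\mathbf{V}^{\varepsilon}\to\mathbf{V}$ in every $B^{k}_{\Phi^{t}}$, which I then transfer back to $C^{\infty}$ convergence of $\mathbf{v}^{\varepsilon}(t)$ to $\mathbf{w}(t)$ via the unitary transformation $U_{\Phi^{t}}$.

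First, I would take the $L^{2}$ scalar product of (62) with $\mathbf{V}^{\varepsilon}-\mathbf{V}$, integrate by parts exactly as in (53), and combine the uniform strict convexity $\mathbf{Hess}\,\Phi^{t}\geq \delta_{0}$ with Cauchy--Schwarz on the right-hand side. This produces a first-order a priori estimate
\begin{equation*}
\|\mathbf{V}^{\varepsilon}-\mathbf{V}\|_{B^{1}_{\Phi^{t}}}\leq C_{t}\,\varepsilon.
\end{equation*}
Next, I rewrite (62) in the form $(-\mathbf{\Delta}+|\mathbf{\nabla}\Phi^{t}|^{2}/4)(\mathbf{V}^{\varepsilon}-\mathbf{V})=r_{\varepsilon}$, absorbing $\mathbf{Hess}\,\Phi^{t}(\mathbf{V}^{\varepsilon}-\mathbf{V})$ and the $\mathbf{\Delta}\Phi^{t}/2$ term into $r_{\varepsilon}$; by the $B^{1}$ control just obtained together with the uniform $B^{k}$ bound on $\mathbf{V}^{\varepsilon}$ and the $\mathcal{O}(\varepsilon)$ bound on the original right-hand side of (62) in $B^{0}$, one checks that $r_{\varepsilon}$ is $\mathcal{O}(\varepsilon)$ in $B^{0}$. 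Applying the same regularity estimates for the scalar Witten Laplacian that were used to derive the uniform $B^{k}$ bounds on $\mathbf{V}^{\varepsilon}$, I gain two orders of regularity at each step: $\mathcal{O}(\varepsilon)$ in $B^{k}_{\Phi^{t}}$ for the solution together with $\mathcal{O}(\varepsilon)$ in $B^{k-1}_{\Phi^{t}}$ for $r_{\varepsilon}$ improves the solution to $\mathcal{O}(\varepsilon)$ in $B^{k+1}_{\Phi^{t}}$. Iterating, $\mathbf{V}^{\varepsilon}-\mathbf{V}$ is $\mathcal{O}(\varepsilon)$ in $B^{k}_{\Phi^{t}}$ for every $k\geq 0$.

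Finally, I translate back to the unweighted quantities. Since $\mathbf{v}^{\varepsilon}(t)=\mathbf{V}^{\varepsilon}e^{\Phi^{t+\varepsilon}/2}$ and $\mathbf{w}(t)=\mathbf{V}e^{\Phi^{t}/2}$, I split
\begin{equation*}
\mathbf{v}^{\varepsilon}(t)-\mathbf{w}(t)=\bigl(\mathbf{V}^{\varepsilon}-\mathbf{V}\bigr)e^{\Phi^{t+\varepsilon}/2}+\mathbf{V}\,e^{\Phi^{t}/2}\bigl(e^{-\varepsilon g/2}-1\bigr),
\end{equation*}
and control each piece on an arbitrary compact set using the $B^{k}_{\Phi^{t}}$ bounds plus Sobolev embedding (which in these weighted spaces reduces, on any bounded region, to the usual $W^{k,2}\hookrightarrow C^{k-\lfloor|\Lambda|/2\rfloor-1}$). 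The Taylor expansion $e^{-\varepsilon g/2}-1=\mathcal{O}(\varepsilon)$ together with the bounds on $g$ and its derivatives handles the second term. The main obstacle to watch is that the norms $B^{k}_{\Phi^{t+\varepsilon}}$ and $B^{k}_{\Phi^{t}}$ depend on the parameter: one must verify that they are uniformly equivalent for small $\varepsilon$, which follows from $\Phi^{t+\varepsilon}-\Phi^{t}=-\varepsilon g$ combined with the hypotheses $|\partial^{\alpha}\mathbf{\nabla}g|\leq C_{\alpha}$ and $\mathbf{Hess}\,g\leq C$, so that the bootstrap can be carried out coherently in the fixed space $B^{k}_{\Phi^{t}}$ as $\varepsilon\to 0$.
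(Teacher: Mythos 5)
Your proposal follows essentially the same route as the paper: an energy estimate on equation (62) combined with the previously established uniform $B_{\Phi^{t}}^{k}$ bounds on $\mathbf{V}^{\varepsilon}$, followed by a bootstrap in the weighted Sobolev scale to conclude that $\mathbf{V}^{\varepsilon}-\mathbf{V}$ is $\mathcal{O}(\varepsilon)$ in every $B_{\Phi^{t}}^{k}$. The only difference is that you spell out the final transfer back to $\mathbf{v}^{\varepsilon}(t)-\mathbf{w}(t)$ via the unitary transformation, including the splitting with the factor $e^{-\varepsilon g/2}-1$ and the uniform equivalence of the $\varepsilon$-dependent norms, a step the paper leaves implicit; this is added care, not a divergence in method.
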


\begin{remark}
The proposition establishes that $\mathbf{v}(t)$ is differentiable in $t$
and $\dfrac{d}{dt}\mathbf{v}(t)$ is given by the unique $C^{\infty }-$%
solution $\mathbf{w}(t)$ of the system 
\begin{equation}
-\mathbf{\Delta w}(t)+\mathbf{\nabla }\Phi ^{t}\cdot \mathbf{\nabla w}(t)+%
\mathbf{Hess}\Phi ^{t}\mathbf{w}(t)=\mathbf{Hess}g\mathbf{v}(t)-\mathbf{%
\nabla }g\cdot \mathbf{\nabla v}(t).
\end{equation}%
Iterating this argument, we easily get that, $\mathbf{v}(t)$ is smooth in $%
t\in (0,T).$
\end{remark}

Now we are ready for the following:

\section{A Formula for the Taylor Coefficients}

First observe that for an arbitrary suitable function $\ f(t)=f(t,w)$%
\begin{equation}
\frac{\partial }{\partial t}<f(t)>_{t,\Lambda }=<f^{\;\prime
}(t)>_{t,\Lambda }+\mathbf{cov}(f,g).
\end{equation}%
Hence,%
\begin{equation}
\frac{\partial }{\partial t}<f(t)>_{t,\Lambda }=<f^{\;\prime
}(t)>_{t,\Lambda }+<A_{\Phi ^{t}}^{(1)^{-1}}\left( \mathbf{\nabla }f\right)
\cdot \mathbf{\nabla }g>_{t,\Lambda }.
\end{equation}%
Let 
\begin{equation}
A_{g}f:=A_{\Phi ^{t}}^{(1)^{-1}}\left( \mathbf{\nabla }f\right) \cdot 
\mathbf{\nabla }g.
\end{equation}%
Thus,%
\begin{equation}
\frac{\partial }{\partial t}<f(t)>_{t,\Lambda }=<\left( \dfrac{\partial }{%
\partial t}+A_{g}\right) f>_{t,\Lambda }.
\end{equation}%
The linear operator $\dfrac{\partial }{\partial t}+A_{g}$ will be denoted by 
$H_{g}.$

To obtain a formula for the coefficients in the Taylor expansion of 
\begin{equation}
\theta _{\Lambda }(t)=\log \left[ \int dxe^{-\Phi _{\Lambda }^{t}(x)}\right]
,
\end{equation}%
we first the derivatives of $\theta _{\Lambda }(t)$ in terms of $H_{g}$ 
\begin{equation*}
\theta _{\Lambda }^{\prime }(t)=<g>_{t,\Lambda }=<\left( \dfrac{\partial }{%
\partial t}+A_{g}\right) ^{0}g>_{t,\Lambda }=<H_{g}^{0}g>_{t,\Lambda };
\end{equation*}%
\begin{equation*}
\theta _{\Lambda }^{\prime \prime }(t)=\frac{\partial }{\partial t}%
<g>_{t,\Lambda }=<A_{\Phi ^{t}}^{(1)^{-1}}\left( \mathbf{\nabla }g\right)
\cdot \mathbf{\nabla }g>_{t,\Lambda }=<\left( \dfrac{\partial }{\partial t}%
+A_{g}\right) g>_{t,\Lambda };
\end{equation*}%
\begin{eqnarray*}
\theta _{\Lambda }^{\prime \prime \prime }(t) &=&\frac{\partial }{\partial t}%
<A_{\Phi ^{t}}^{(1)^{-1}}\left( \mathbf{\nabla }g\right) \cdot \mathbf{%
\nabla }g>_{t,\Lambda }=<\frac{\partial }{\partial t}\left( A_{\Phi
^{t}}^{(1)^{-1}}\left( \mathbf{\nabla }g\right) \cdot \mathbf{\nabla }%
g\right) >_{t,\Lambda } \\
+ &<&\left( A_{\Phi ^{t}}^{(1)^{-1}}\mathbf{\nabla }\left( A_{\Phi
^{t}}^{(1)^{-1}}\left( \mathbf{\nabla }g\right) \cdot \mathbf{\nabla }%
g\right) \right) \cdot \mathbf{\nabla }g>_{t,\Lambda } \\
&=&<\left( \dfrac{\partial }{\partial t}+A_{g}\right) ^{2}g>_{t,\Lambda }.
\end{eqnarray*}%
By induction it is easy to see that%
\begin{equation*}
\theta _{\Lambda }^{(n)}(t)=<\left( \dfrac{\partial }{\partial t}%
+A_{g}\right) ^{n-1}g>_{t,\Lambda }=<H_{g}^{(n-1)}g>_{t,\Lambda
}\;\;\;\;\;\;\;(\forall n\geq 1)
\end{equation*}%
Next, we propose to find a simpler formula for $\theta _{\Lambda }^{(n)}(t)$
that only involves $A_{g}.$%
\begin{eqnarray*}
H_{g}g &=&A_{\Phi ^{t}}^{(1)^{-1}}\left( \mathbf{\nabla }g\right) \cdot 
\mathbf{\nabla }g \\
&=&A_{g}g
\end{eqnarray*}%
\begin{equation}
H_{g}^{2}g=\frac{\partial }{\partial t}\mathbf{\nabla }f\cdot \mathbf{\nabla 
}g+\left( A_{\Phi ^{t}}^{(1)^{-1}}\mathbf{\nabla }\left( A_{\Phi
^{t}}^{(1)^{-1}}\left( \mathbf{\nabla }g\right) \cdot \mathbf{\nabla }%
g\right) \right) \cdot \mathbf{\nabla }g
\end{equation}%
where $f$ satisfies the equation 
\begin{equation}
\mathbf{\nabla }f=A_{\Phi ^{t}}^{(1)^{-1}}\left( \mathbf{\nabla }g\right) .
\end{equation}%
With $\mathbf{v}(t)=\mathbf{\nabla }f,$ as before, we get 
\begin{equation*}
\frac{\partial }{\partial t}\mathbf{\nabla }f\cdot \mathbf{\nabla }g=A_{\Phi
^{t}}^{(1)^{-1}}\left( \mathbf{Hess}g\mathbf{v}(t)+\mathbf{\nabla }g\cdot 
\mathbf{\nabla v}(t)\right) \cdot \mathbf{\nabla }g
\end{equation*}%
and $H_{g}^{2}$ becomes 
\begin{eqnarray*}
H_{g}^{2}g &=&A_{\Phi ^{t}}^{(1)^{-1}}\left[ \left( \mathbf{Hess}g\mathbf{v}%
(t)+\mathbf{\nabla }g\cdot \mathbf{\nabla v}(t)\right) +\mathbf{\nabla }%
\left( A_{\Phi ^{t}}^{(1)^{-1}}\left( \mathbf{\nabla }g\right) \cdot \mathbf{%
\nabla }g\right) \right] \cdot \mathbf{\nabla }g \\
&=&A_{\Phi ^{t}}^{(1)^{-1}}2\mathbf{\nabla }\left( A_{g}g\right) \cdot 
\mathbf{\nabla }g \\
&=&2A_{g}^{2}g.
\end{eqnarray*}

\begin{proposition}
\textit{If }%
\begin{equation*}
\theta _{\Lambda }(t)=\log \left[ \int dxe^{-\Phi ^{t}(x)}\right]
\end{equation*}%
\textit{where}%
\begin{equation*}
\Phi ^{t}(x)=\Phi _{\Lambda }(x)-tg(x)
\end{equation*}%
\textit{is as above then }$\theta _{\Lambda }^{(n)}(t),$\textit{\ the }$nth-$%
\textit{\ derivative of }$\theta _{\Lambda }(t)$\textit{\ is given by the
formula }%
\begin{equation*}
\theta _{\Lambda }^{\prime }(t)=<g>_{t,\Lambda },
\end{equation*}%
\textit{and for }$n\geq 1$%
\begin{equation*}
\theta _{\Lambda }^{(n)}(t)=\left( n-1\right) !<A_{g}^{n-1}g>_{t,\Lambda }.
\end{equation*}
\end{proposition}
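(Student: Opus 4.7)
The excerpt has already established by induction that $\theta_\Lambda^{(n)}(t) = <H_g^{n-1} g>_{t,\Lambda}$ with $H_g = \partial_t + A_g$, so the task reduces to proving the operator identity
\begin{equation*}
H_g^{n-1} g = (n-1)!\, A_g^{n-1} g \qquad (n \geq 1).
\end{equation*}
My plan is to extract a clean commutation relation between $\partial_t$ and $A_g$ and to convert powers of $H_g$ into powers of $A_g$ by iteration.

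First I would prove the commutator identity $[\partial_t, A_g] = A_g^2$. Given a smooth function $\varphi$, set $\tilde{u}(t) := A_{\Phi^t}^{(1)^{-1}} \mathbf{\nabla}\varphi$, so that $A_g \varphi = \tilde{u} \cdot \mathbf{\nabla} g$. Differentiating $A_{\Phi^t}^{(1)} \tilde{u} = \mathbf{\nabla}\varphi$ in $t$ and using $\partial_t \Phi^t = -g$, which gives $\partial_t A_{\Phi^t}^{(1)} = -\mathbf{\nabla}g \cdot \mathbf{\nabla} - \mathbf{Hess}\,g$ on one-forms, one obtains
\begin{equation*}
A_{\Phi^t}^{(1)} \partial_t \tilde{u} = \mathbf{\nabla}g \cdot \mathbf{\nabla}\tilde{u} + \mathbf{Hess}\,g \cdot \tilde{u}.
\end{equation*}
The crucial observation is that $\tilde{u}$ is a gradient ($\tilde{u} = \mathbf{\nabla}F$ for the associated zero-form solution), so $\mathbf{\nabla}\tilde{u}$ is symmetric and the right-hand side equals $\mathbf{\nabla}(\tilde{u} \cdot \mathbf{\nabla}g) = \mathbf{\nabla}(A_g \varphi)$. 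Applying $A_{\Phi^t}^{(1)^{-1}}$ and taking the inner product with $\mathbf{\nabla}g$ yields $\partial_t(A_g \varphi) - A_g(\partial_t \varphi) = A_g^2 \varphi$, which is the asserted commutator.

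Next I would iterate. A short induction on $n$ using this commutator gives $H_g A_g^n = A_g^n H_g + n A_g^{n+1}$. Combining this with $H_g g = A_g g$ (since $\partial_t g = 0$) and with the inductive hypothesis $H_g^{n-1} g = (n-1)!\, A_g^{n-1} g$, one computes
\begin{equation*}
H_g^{n} g = (n-1)!\, H_g A_g^{n-1} g = (n-1)!\,\bigl[ A_g^{n-1}(A_g g) + (n-1)\, A_g^n g \bigr] = n!\, A_g^n g.
\end{equation*}
Taking $<\cdot>_{t,\Lambda}$ of both sides then gives the claimed formula for $\theta_\Lambda^{(n)}(t)$.

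The main obstacle is the rigorous justification of differentiating under $A_{\Phi^t}^{(1)^{-1}}$ and of repeatedly composing $A_{\Phi^t}^{(1)^{-1}}$ with itself. Proposition 7 provides precisely the first step by showing that $t \mapsto \mathbf{v}(t) = \mathbf{\nabla}f(t)$ is smooth into every $B_{\Phi^t}^k$; its bootstrap argument, resting on the unitary conjugation $U_{\Phi^t}$ and on the uniform strict convexity of $\mathbf{Hess}\,\Phi^t$, must be iterated to propagate enough regularity for every $A_g^n g$ simultaneously. A subsidiary but essential point is the symmetry of $\mathbf{\nabla}\tilde{u}$ at each stage, which is automatic provided we keep reading $A_{\Phi^t}^{(1)^{-1}}$ of a gradient as itself a gradient --- this is precisely what makes the commutator close as $A_g^2$ rather than as a more complicated expression, and is what allows the factorials to emerge by a clean combinatorial induction.
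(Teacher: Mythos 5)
Your proposal is correct and follows essentially the same route as the paper: the heart of both arguments is differentiating the resolvent identity $A_{\Phi^{t}}^{(1)}\mathbf{\nabla}\varphi_{n}=\mathbf{\nabla}(A_{g}^{n-2}g)$ in $t$ and using the symmetry of $\mathbf{Hess}\,\varphi_{n}$ to recombine $\mathbf{Hess}\,g\,\mathbf{\nabla}\varphi_{n}+\mathbf{\nabla}g\cdot\mathbf{\nabla}(\mathbf{\nabla}\varphi_{n})$ into $\mathbf{\nabla}(A_{g}\varphi_{n})$, which is exactly your commutator $[\partial_{t},A_{g}]=A_{g}^{2}$. Your packaging of this as an operator identity followed by $H_{g}A_{g}^{n}=A_{g}^{n}H_{g}+nA_{g}^{n+1}$ is a slightly cleaner bookkeeping than the paper's double application of the induction hypothesis inside the computation of $\partial_{t}A_{g}^{n-1}g$, but the underlying mathematics is the same.
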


\begin{proof}
We have already established that 
\begin{equation*}
\theta _{\Lambda }^{(n)}(t)=<H_{g}^{n-1}g>_{t,\Lambda }\;\;\;\;for\;n\geq 1.
\end{equation*}%
It then only remains to prove that 
\begin{equation*}
H_{g}^{n-1}g=\left( n-1\right) !A_{g}^{n-1}g\;\;\;\;\;\;for\;n\geq 1.
\end{equation*}%
The result is already established above for $n=1,2,3,.$ By induction, assume
that 
\begin{equation*}
H_{g}^{n-1}g=\left( n-1\right) !A_{g}^{n-1}g\;.
\end{equation*}%
if $n$ is replaced by $\tilde{n}\leq n.$%
\begin{eqnarray*}
H_{g}^{n}g &=&\left( \dfrac{\partial }{\partial t}+A_{g}\right) \left(
\left( n-1\right) !A_{g}^{n-1}g\right) \\
&=&\left( n-1\right) !\left( \dfrac{\partial }{\partial t}%
A_{g}^{n-1}g+A_{g}^{n}g\right) .
\end{eqnarray*}%
Now 
\begin{eqnarray*}
A_{g}^{n-1}g &=&\left[ A_{\Phi ^{t}}^{(1)^{-1}}\mathbf{\nabla }\left(
A_{g}^{n-2}g\right) \right] \cdot \mathbf{\nabla }g\; \\
&=&\mathbf{\nabla }\varphi _{n}\cdot \mathbf{\nabla }g
\end{eqnarray*}%
where%
\begin{equation*}
\mathbf{\nabla }\varphi _{n}=\left[ A_{\Phi ^{t}}^{(1)^{-1}}\mathbf{\nabla }%
\left( A_{g}^{n-2}g\right) \right] .
\end{equation*}%
We obtain,%
\begin{equation*}
\dfrac{\partial }{\partial t}\mathbf{\nabla }\varphi _{n}=A_{\Phi
^{t}}^{(1)^{-1}}\left( \dfrac{\partial }{\partial t}\mathbf{\nabla }%
A_{g}^{n-2}g+\mathbf{Hess}g\mathbf{\nabla }\varphi _{n}+\mathbf{\nabla }%
g\cdot \mathbf{\nabla }\left( \mathbf{\nabla }\varphi _{n}\right) \right) .
\end{equation*}%
We then have 
\begin{eqnarray*}
\dfrac{\partial }{\partial t}A_{g}^{n-1}g &=&\dfrac{\partial }{\partial t}%
\mathbf{\nabla }\varphi _{n}\cdot \mathbf{\nabla }g \\
&=&\left[ A_{\Phi ^{t}}^{(1)^{-1}}\left( \dfrac{\partial }{\partial t}%
\mathbf{\nabla }A_{g}^{n-2}g+\mathbf{Hess}g\mathbf{\nabla }\varphi _{n}+%
\mathbf{\nabla }g\cdot \mathbf{\nabla }\left( \mathbf{\nabla }\varphi
_{n}\right) \right) \right] \cdot \mathbf{\nabla }g \\
&=&\left[ A_{\Phi ^{t}}^{(1)^{-1}}\left( \dfrac{\partial }{\partial t}%
\mathbf{\nabla }A_{g}^{n-2}g+\mathbf{\nabla }\left( \mathbf{\nabla }\varphi
_{n}\cdot \mathbf{\nabla }g\right) \right) \right] \cdot \mathbf{\nabla }g \\
&=&A_{g}\left[ \dfrac{\partial }{\partial t}A_{g}^{n-2}g+A_{g}\left(
A_{g}^{n-2}g\right) \right] \\
&=&A_{g}H_{g}\left( A_{g}^{n-2}g\right) . \\
&=&A_{g}H_{g}\left( \frac{1}{\left( n-2\right) !}H_{g}^{(n-2)}g\right)
\;\;\;\;\;\;\;(\text{from the induction hypothesis}) \\
&=&\frac{1}{\left( n-2\right) !}A_{g}H_{g}^{(n-1)}g\; \\
&=&\frac{1}{\left( n-2\right) !}A_{g}\left( \left( n-1\right)
!A_{g}^{n-1}g\right) \;\;\;\;\;(\text{still by the induction hypothesis})\;\
\ \ \  \\
&=&(n-1)A_{g}^{n}g.
\end{eqnarray*}%
Thus,%
\begin{eqnarray*}
H_{g}^{n}g &=&\left( n-1\right) !\left( n-1+1\right) A_{g}^{n}g \\
&=&n!A_{g}^{n}g
\end{eqnarray*}
\end{proof}

\begin{proposition}
\textit{If }$g(0)=0,$ \textit{then }the formula 
\begin{equation*}
\theta _{\Lambda }^{(n)}(t)=\left( n-1\right) !<A_{g}^{n-1}g>_{t,\Lambda
},\;\;\;n\geq 2
\end{equation*}%
still holds if we no longer require $\Psi $ and $g$ to be compactly
supported in $\mathbb{R}^{\Lambda }.$
\end{proposition}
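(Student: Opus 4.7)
The plan is to imitate the cutoff construction already carried out in Section 3. For each $\varepsilon\in(0,1]$, set $\Psi_\varepsilon(x)=\chi_\varepsilon(|x|)\Psi(x)$ and $g_\varepsilon(x)=\chi_\varepsilon(|x|)g(x)$, and consider the compactly supported potential $\Phi^{t,\varepsilon}_{\Lambda}(x)=\frac{x^2}{2}+\Psi_\varepsilon(x)-tg_\varepsilon(x)$. Thanks to the strict convexity gained from the quadratic term plus the $\mathcal{O}(\varepsilon)$ perturbation estimates established around (28)--(34), the potential $\Phi^{t,\varepsilon}_{\Lambda}$ satisfies all the hypotheses of Proposition 6 uniformly in $\varepsilon$ (for $t$ restricted to a sufficiently small $[0,T')\subset[0,T)$). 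Consequently the compactly supported version of Proposition 6 applies, giving
\begin{equation*}
\theta_{\Lambda,\varepsilon}^{(n)}(t) = (n-1)!\,\bigl\langle A_{g_\varepsilon}^{n-1} g_\varepsilon\bigr\rangle_{t,\Lambda,\varepsilon},\qquad n\geq 2.
\end{equation*}
The proposition then follows by letting $\varepsilon\to 0$ on both sides, provided we can justify that the two sides converge to $\theta_\Lambda^{(n)}(t)$ and $(n-1)!\,\langle A_g^{n-1}g\rangle_{t,\Lambda}$ respectively.

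For the left-hand side I would differentiate $\theta_{\Lambda,\varepsilon}(t)=\log\int e^{-\Phi^{t,\varepsilon}_\Lambda}dx$ directly and apply dominated convergence. The quadratic dominant $x^2/2$ ensures integrable majorants that are independent of $\varepsilon$, while the pointwise convergence $\Phi^{t,\varepsilon}_\Lambda\to\Phi^t_\Lambda$ is automatic from $\chi_\varepsilon\to 1$. For the right-hand side the key ingredient is that $\mathbf{v}_\varepsilon(t)=\mathbf{\nabla}f_\varepsilon(t)\to\mathbf{v}(t)$ in $C^\infty$, which is exactly the convergence statement already used in [66] for the $k=1$ case and then bootstrapped through the regularity argument of Section 7. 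From there, a finite induction on $n$ transfers the convergence to each iterate $A_{g_\varepsilon}^{k}g_\varepsilon\to A_g^k g$, since $A_g$ is built from one application of $A_{\Phi^t}^{(1)^{-1}}$ together with a gradient and an inner product against $\nabla g$; the weighted derivative bounds of Proposition 2 (extended via Proposition 3) furnish the pointwise domination needed to pass $\langle\cdot\rangle_{t,\Lambda,\varepsilon}\to\langle\cdot\rangle_{t,\Lambda}$ under the integral.

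The main obstacle is the uniform-in-$\varepsilon$ control of these iterated compositions. Each application of $A_{g_\varepsilon}$ lowers regularity, so $A_{g_\varepsilon}^{n-1}g_\varepsilon$ involves derivatives of $f_\varepsilon$ and of $\Phi^{t,\varepsilon}$ up to order $n$. To keep everything bounded as $\varepsilon\to 0$ I would invoke the higher-order growth assumption on $\Phi$, which transfers to $\Phi^{t,\varepsilon}$ with only $\mathcal{O}(\varepsilon)$ corrections; here the assumption $g(0)=0$ is essential, playing exactly the role that $\Psi(0)=0$ played in (30)--(34): writing $g(x)=\int_0^1 x\cdot\mathbf{\nabla}g(sx)\,ds$ and using $r\partial^\alpha\chi_\varepsilon(r)=\mathcal{O}_\alpha(\varepsilon)$ makes the Leibniz cross terms $|\partial^\beta\chi_\varepsilon(r)\,\partial^{\alpha-\beta}g|$ uniformly small. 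With these higher-order exponential estimates in place, Proposition 2 applied to both the $f_\varepsilon$ side and the iterated compositions $\varphi_{n,\varepsilon}$ (defined as in the proof of Proposition 6) provides $\Lambda$- and $\varepsilon$-uniform pointwise bounds, validating the dominated-convergence passage to the limit on both sides and completing the proof.
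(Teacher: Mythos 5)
Your proposal follows essentially the same route as the paper: introduce the cutoffs $\Psi_\varepsilon=\chi_\varepsilon(|x|)\Psi$ and $g_\varepsilon=\chi_\varepsilon(|x|)g$, verify that the truncated potential satisfies the standing hypotheses uniformly in $\varepsilon$, apply the compactly supported version of the formula, and pass to the limit using the $C^\infty$ convergence of $\mathbf{v}_\varepsilon=A_{\Phi^t}^{(1)^{-1}}\mathbf{\nabla}g_\varepsilon$ to $A_{\Phi^t}^{(1)^{-1}}\mathbf{\nabla}g$ cited from [8]. In fact you supply more detail than the paper does (the induction transferring convergence to the iterates $A_{g_\varepsilon}^{k}g_\varepsilon$, the dominated-convergence argument for $\theta_{\Lambda,\varepsilon}^{(n)}$, and the precise role of $g(0)=0$), all of which is consistent with the paper's sketch.
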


\begin{proof}
As in [8], consider the family cutoff functions 
\begin{equation}
\chi =\chi _{\varepsilon }
\end{equation}%
$(\varepsilon \in \lbrack 0,1])$ in $\mathcal{C}_{o}^{\infty }(\mathbb{R})$
with value in $[0,1]$ such that%
\begin{equation*}
\left\{ 
\begin{array}{c}
\chi =1\text{ \ \ \ \ \ \ \ \ \ \ \ \ \ \ \ \ for }\left| t\right| \leq
\varepsilon ^{-1}\text{ } \\ 
\left| \chi ^{(k)}(t)\right| \leq C_{k}\dfrac{\varepsilon }{\left| t\right|
^{k}}\text{ \ \ \ \ \ \ \ \ \ \ \ \ \ \ \ \ for }k\in \mathbb{N\ }\text{\ \
\ \ \ \ \ \ \ \ \ \ \ \ \ \ \ \ \ \ \ \ \ }%
\end{array}%
\right.
\end{equation*}%
We could take for instance%
\begin{equation*}
\chi _{\varepsilon }(t)=f(\varepsilon \ln \left| t\right| )
\end{equation*}%
for a suitable $f$.

We then introduce%
\begin{equation}
\Psi _{\varepsilon }(x)=\chi _{\varepsilon }(\left| x\right| )\Psi ,\ \ \ \
\ \ \ \ \ x\in \mathbb{R}^{\Lambda }
\end{equation}%
and 
\begin{equation}
g_{\varepsilon }(x)=\chi _{\varepsilon }(\left| x\right| )g\text{ \ \ \ \ \
\ \ \ \ \ \ \ }x\in \mathbb{R}^{\Gamma }\text{\ }
\end{equation}%
One can check that both $\Psi _{\varepsilon }(x)$ and $g_{\varepsilon }(x)$
satisfies the assumptions made above on $\Psi $ and $g.$ Now consider the
equation 
\begin{equation}
-\mathbf{\Delta }f_{\varepsilon }+\mathbf{\nabla }\Phi _{\varepsilon
}^{t}\cdot \mathbf{\nabla }f_{\varepsilon }=g_{\varepsilon }-<g_{\varepsilon
}>_{t,\Lambda .}
\end{equation}%
which implies 
\begin{equation}
\left( -\mathbf{\Delta }+\mathbf{\nabla }\Phi _{\varepsilon }^{t}\cdot 
\mathbf{\nabla }\right) \otimes \mathbf{v}_{\varepsilon }+\mathbf{Hess}\Phi
_{\varepsilon }^{t}\mathbf{v}_{\varepsilon }=\mathbf{\nabla }g_{\varepsilon }
\end{equation}%
where 
\begin{equation*}
\mathbf{v}_{\varepsilon }\mathbf{=\nabla }f_{\varepsilon }
\end{equation*}%
It was proved in [8] that $\mathbf{v}_{\varepsilon }=A_{\Phi ^{t}}^{(1)^{-1}}%
\mathbf{\nabla }g_{\varepsilon }\;$converges in $C^{\infty }$ to $A_{\Phi
^{t}}^{(1)^{-1}}\mathbf{\nabla }g$ as $\varepsilon \rightarrow 0.$
\end{proof}

\begin{proposition}
Let%
\begin{equation*}
P_{\Lambda }(t)=\dfrac{\theta _{\Lambda }(t)}{\left| \Lambda \right| }
\end{equation*}%
be the finite volume Pressure. \newline
Denote by $a_{n}\;(n\geq 2)$ the $nth$ Taylor coefficient. We have 
\begin{equation*}
a_{n}=\frac{<A_{g}^{n-1}g>_{\Lambda }}{n\left| \Lambda \right| }
\end{equation*}
\end{proposition}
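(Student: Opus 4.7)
The statement reduces to a direct computation combining the definition of the Taylor coefficient with the formula for $\theta_\Lambda^{(n)}(t)$ established in Proposition~6 (together with its extension in Proposition~7 allowing non-compactly-supported $\Psi$ and $g$). My plan is therefore to unwind the definitions in the order they appear and apply the prior proposition.

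First, I would recall that if $P_\Lambda(t)$ is analytic (or at least $C^{\infty}$) in a neighborhood of $t=0$, its Taylor expansion about $0$ is $P_\Lambda(t) = \sum_{n\geq 0} a_n t^n$ with $a_n = P_\Lambda^{(n)}(0)/n!$. Since $P_\Lambda(t) = \theta_\Lambda(t)/|\Lambda|$ by definition, differentiation $n$ times and evaluation at $t=0$ gives
\begin{equation*}
a_n \;=\; \frac{P_\Lambda^{(n)}(0)}{n!} \;=\; \frac{\theta_\Lambda^{(n)}(0)}{n!\,|\Lambda|}.
\end{equation*}

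Next I would invoke Proposition~6 (combined with Proposition~7 to drop the compact support hypotheses, using $g(0)=0$), which asserts that for $n\geq 1$,
\begin{equation*}
\theta_\Lambda^{(n)}(t) \;=\; (n-1)!\,\langle A_g^{n-1} g\rangle_{t,\Lambda}.
\end{equation*}
Specializing to $t=0$ and writing $\langle\cdot\rangle_\Lambda := \langle\cdot\rangle_{0,\Lambda}$, and substituting into the displayed expression for $a_n$, we obtain
\begin{equation*}
a_n \;=\; \frac{(n-1)!\,\langle A_g^{n-1} g\rangle_\Lambda}{n!\,|\Lambda|} \;=\; \frac{\langle A_g^{n-1} g\rangle_\Lambda}{n\,|\Lambda|},
\end{equation*}
as claimed.

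The step that requires the most care is simply the justification that plugging in $t=0$ into the formula of Proposition~6 is legitimate: one needs the solvability/regularity of the relevant Witten Laplacian equation at $t=0$, which is already guaranteed by the assumptions $\mathbf{Hess}\,\Phi \geq \delta_o$ and the bounds on $\partial^\alpha \nabla \Psi$ carried over to $\Phi_\Lambda^t = \Phi - tg$ for $t$ in a neighborhood of $0$, as discussed in Section~7. Beyond that, the argument is purely algebraic: the cancellation $(n-1)!/n! = 1/n$ is what produces the factor $1/n$ in the formula, and no additional estimate is required. $\blacksquare$
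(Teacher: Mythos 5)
Your argument is correct and is exactly the computation the paper intends (the paper in fact states this proposition without proof, treating it as an immediate consequence of the formula $\theta_{\Lambda}^{(n)}(t)=(n-1)!\,\langle A_{g}^{n-1}g\rangle_{t,\Lambda}$): divide by $n!\,|\Lambda|$, evaluate at $t=0$, and cancel $(n-1)!/n!=1/n$. Your added remark about the legitimacy of evaluating at $t=0$ via the solvability and regularity hypotheses on $\Phi_{\Lambda}^{t}$ is a sensible precaution and consistent with Section~7 of the paper.
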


\begin{remark}
This formula for $a_{n}$ gives a direction towards proving the analyticity
of the pressure in the thermodynamic limit. In fact one only needs to
provide a suitable $C^{n}$ estimate for $<A_{g}^{n-1}g>_{\Lambda }.$
\end{remark}

\section{Some Consequences of the Formula for $nth-$Derivative of the
Pressure.}

In the following, we shall additionally assume that 
\begin{equation*}
\left. 
\begin{array}{c}
\mathbf{\nabla }g(0)=0,\;\;\;\;\;\text{and} \\ 
\mathbf{\nabla }\Phi _{\Lambda }^{t}(0)=0\text{ \ \ \ for all }t\in \lbrack
0,T).%
\end{array}%
\right.
\end{equation*}%
When $n=1,$ we recall that $A_{g}^{0}g=g$, 
\begin{equation*}
\theta _{\Lambda }^{\prime }(t)=<g>_{t,\Lambda }
\end{equation*}%
and if 
\begin{equation*}
\mathbf{v}(t)=\mathbf{\nabla }f=A_{\Phi ^{t}}^{(1)^{-1}}\mathbf{\nabla }g,
\end{equation*}%
then we have 
\begin{equation*}
\left( -\mathbf{\Delta }+\mathbf{\nabla }\Phi _{\Lambda }^{t}\cdot \mathbf{%
\nabla }\right) \otimes \mathbf{v}(t)+\mathbf{Hess}\Phi _{\Lambda }^{t}%
\mathbf{v}(t)=\mathbf{\nabla }g
\end{equation*}%
Again the tensor notation means that $\left( -\mathbf{\Delta }+\mathbf{%
\nabla }\Phi _{\Lambda }^{t}\cdot \mathbf{\nabla }\right) $ acts diagonally
on the components of $\mathbf{v}(t).$

As in [8] $\mathbf{v}(t)$ is a solution of the equation%
\begin{equation}
g=<g>_{t,\Lambda }+\mathbf{v}(t)\cdot \mathbf{\nabla }\Phi _{\Lambda
}^{t}-div\mathbf{v}(t).
\end{equation}%
Using the assumptions above, we have 
\begin{eqnarray*}
\theta _{\Lambda }^{\prime }(t) &=&<g>_{t,\Lambda } \\
&=&div\mathbf{v}(t)(0).
\end{eqnarray*}%
Similarly, the formula 
\begin{equation*}
\theta _{\Lambda }^{(n)}(t)=\left( n-1\right) !<A_{g}^{n-1}g>_{t,\Lambda },
\end{equation*}%
implies that 
\begin{equation*}
\theta _{\Lambda }^{(n)}(t)=\left( n-1\right) !div\mathbf{v}_{n}(t)(0),
\end{equation*}%
where 
\begin{equation*}
\mathbf{v}_{n}(t)=A_{\Phi ^{t}}^{(1)^{-1}}\mathbf{\nabla }\left(
A_{g}^{n-1}g\right) .
\end{equation*}%
\textbf{Acknowledgements.} I would like to thank Professor. Haru Pinson and
Professor. Tom Kennedy for accepting to discuss with me the ideas developed
in this paper. I also would like to acknowledge\ Professor Bruno
Nachtergaele for his constructive suggestions and all members of the
mathematical physics group at the University of Arizona for their support.

\end{document}